\newtheorem{tw}{Theorem}
\newtheorem{lem}{Lemma}
\newtheorem{cor}[tw]{Corollary}
\newtheorem{df}{Definition}
\DeclareMathOperator{\sgn}{sgn}
\begin{document}

\title{Game Dynamics for Players \\
 with Social and Material Preferences  }
\author{Tadeusz P{\l}atkowski and Jan Zakrzewski}
\affiliation{Faculty of Mathematics, Informatics, and Mechanics \\
University of Warsaw, Banacha 2, 02-097 Warsaw, Poland}



\begin{abstract}
We consider the dynamics, existence and stability of the equilibrium states   
for large populations of individuals who can play various types of 
non--cooperative games.  
The players imitate the most attractive strategies, and the choice is motivated not only by the material 
payoffs of the strategies, 
but also by their popularity in the population. 
The parameter which determines the weights of both factors in the equilibrium 
states   
has the same analytical form for all types of considered games, and is identified with the 
sensitivity to reinforcements parameter in the Hernstein's Matching Law. 
We prove theorems of existence and uniqueness, and discuss examples of multiple 
locally stable polymorphic equilibria for the considered types of games.    
\end{abstract}

\maketitle

\parindent=0pt

\section{Introduction}


In the evolutionary game theory the theoretical approach to model the 
dynamics of populations is based on the proportional fitness rule. If the random matching 
and imitation of the individuals with highest payoffs are assumed, the  
equations which govern the evolution of such populations are the celebrated replicator equations, 
cf. e.g.  \cite{Veg, Gin1, Szabo, McE,HofSig, Weibull}, and references cited therein.  

However, in general the individuals can be oriented not only toward imitating  the highest  payoff strategies, but 
they can also take into account other, "noneconomic" factors, in particular  
popularity of strategies in the population. 
The idea of combining together the bias towards imitating the strategies of the most payoff-successful agents and the 
strategies of the majority is not new, cf. for example \cite{Hen}, where the ideas of imitating the successful, 
and copying the majority (the conformist transmission), are put together to stabilize the cooperation in 
populations of individuals, see also \cite{Han} for another argument  that noneconomic factors influence human's behavior.    

We consider a theory of evolution of social systems, that generalizes the standard proportional fitness rule of the 
evolutionary game theory. The biological fitness of a behavior, strategy, measured by its payoff from interactions, is 
replaced by more general function, the attractiveness of the behavior. The attractiveness of the strategy is assumed 
to depend not only on its payoff, but also on its actual popularity (fraction) in the population. The parameters of 
the attractiveness function describe different psychological characters of the members of the population. 
 We~consider the model based on the generalized Cobb--Douglas utility function, cf. \cite{Cob}, \cite{Pla1}, \cite{Pla2}. 
We find a parameter  that describes different personality profiles of the players, and that can be identified with 
the sensitivity to reinforcement in the Matching Law of mathematical psychology 
\cite{Her}. In our setting it determines stability of polymorphic equilibria in all 
considered classes of games. We note that such equilibria for games with two 
strategies played in infinite populations were found for 
example in the aspiration--based models, cf. for example \cite{Pal, Pla3}, 
in general multi-person games, cf. \cite{Gok}, in the models of social 
dilemmas with synergy and discounting, cf. \cite{Hau3}, in the Stag-Hunt multi-person games, 
cf. for example \cite{Pac}, in the multi-person Snowdrift game, cf. for example 
\cite{Sou} and references cited therein. 

We prove  theorems of the existence of polymorphic equilibria, and identify sufficient 
 conditions for their uniqueness for particular classes of the considered games. 
We also find examples of the existence of more than one (locally) stable polymorphism.  

In the next section we formulate the general model and discuss its basic properties. In section III we briefly remind, 
for the convenience of the reader, an existence and uniqueness theorem for a general 
class of two-person symmetric games with two strategies, played in populations 
with social and material preferences, proved in \cite{Pla2}. We also provide examples of multiple stable internal equilibria 
for two-person symmetric games with three strategies. In section IV we discuss  polymorphic equilibria for 
two-person asymmetric games with two strategies, and in section V we consider 
the multi--person games, 
in particular the Public Good game. In section VI we conclude and discuss some open problems.

\section{Model}

We consider an infinite homogeneous population of individuals who interact through a 
random matching, playing at each instant of time a two--person or a multi--person 
non--cooperative game. The players have a finite number $K$ of behavioral types 
(strategies). Let $N_i(t)$ denotes the number of individuals playing the strategy 
$i, \ $ $N=N_1+N_2+...N_K$--the fixed size of the system, $p_i=N_i/N \ $--the frequency, 
or popularity of behavior $i$ for $i=1, 2,...,K$. 

The players who play a strategy $i$ review their strategy according to the Poisson process with the arrival rate $r_i$. 
We model the corresponding stochastic processes as a deterministic flow. The balance conservation equations read, 
cf. e.g. \cite{Weibull}, section 4.4:  
\begin{equation}
\dot p_i(t) = \sum_{j \ne i} [p_j r_j p_j^i - p_i r_i p_i^j], \ \ \ i=1,...K,  \label{a11}
\end{equation}
where $p_j^i$ is the probability that the agent playing the $j$ strategy will switch to the $i$ strategy. 

We assume that $p_j^i$ is proportional to the attractiveness $u_i$ of the strategy $i$: 
\begin{equation}
p_j^i=cu_i.  \label{aaa111}
\end{equation}
In general the attractiveness of a strategy can be a complicated function of various 
factors, describing the state of the system and the characteristics of the players. 
Many social and biological interactions are based on imitation processes, where individuals adopt more successful 
and more popular strategies with larger probability than less successful and less popular ones. 
We choose for the attractiveness of the strategy $i$ the generalized Cobb-Douglas utility function, 
cf. \cite{Cob, Pla1, Pla2}:  
\begin{equation}\label{cobb}
u_i(t) = p_i^{1-\alpha}\nu_i^{1-\beta}, \ \ \ i=1,...,K,  
\end{equation}
with $(\alpha, \beta) \in [0,1] \times [0,1].$ The formula (\ref{cobb}) states that the attractiveness of a strategy depends not only on the payoff of the 
strategy, but also on its actual popularity in the population. The parameters $\alpha, \ \beta$ determine the responsiveness of the function $u_i$ to changes of the current 
popularity and of the mean payoff of the action $i$. They define different 
social and material preferences 
of the individuals, in other words their different personality profiles. 

More attractive strategies ought to have an evolutionary advantage in the considered social systems. 
Note that the attractiveness of the strategy $i$ is increasing and concave function of both arguments, 
i.e. of the mean payoff $\nu_i$ and of 
the popularity $p_i$ in the population. In particular when the attractiveness reaches a higher level, the changes 
are slower.  The attractiveness of the strategy becomes zero if its mean payoff or its popularity in the population 
is zero. In the Appendix, p. A we define and characterize the ideal types of the personality profiles for 
$\alpha, \beta \in \{0, 1\}$, and characterize their basic properties, cf. 
\cite{Pla1, Pla2}. 
All other values of the parameters $\alpha, \beta$ describe intermediate personality profiles. 
We show below that the combination of these parameters
\begin{equation}\label{s}
s=\frac{1-\beta}{\alpha},
\end{equation}
plays a crucial role in determining the polymorphic equlibria and their stability,  
and can be identified with the sensitivity parameter which links the relative rates 
of reinforcements and responses in the Hernstein's Matching Law \cite{Her}. 

As postulated in (\ref{aaa111}), the strategies with higher attractivenesses have larger probability to be imitated. Assuming that the arrival 
rates $r_j$ are constants, and rescaling the time we obtain the system of equations which has the formal structure 
analogous to the replicator equations: 
\begin{equation}
\dot p_i(t) = u(\frac{u_i}{u} - p_i), \ \ \ \ \ u:=\sum_{j=1}^K u_j, \  \ i=1,2,...,K. \label{a2}
\end{equation}
Thus, the fraction $p_i$ of strategy $i$ increases if its normalized attractiveness $\frac{u_i}{u} \in [0, 1]$ 
is bigger than the actual fraction of the strategy $i$, and decreases if it is 
smaller. In particular, for $\alpha = \beta = 0$, corresponding to Homo Afectualis (cf. the Appendix, p. A) the evolution equations 
(\ref{a2}) are identical to the replicator equations of the evolutionary game theory. 

All critical points of the dynamics (\ref{a2}) are obtained as solutions of the 
 system of $K-1$ algebraic equations
\begin{equation}
\frac{u_1}{p_1} = \frac{u_2}{p_2}=...=\frac{u_K}{p_K}.  \label{a6001}
\end{equation}
which, after substituting (\ref{cobb}) is equivalent to 
$$
\frac{p_i}{p_1} = \left(\frac{\nu_i}{\nu_1}\right)^s,  \quad   i=1,...,K,  \label{a6002}
$$
where $\nu_i \ $ is the mean payoff of strategy $i$. In particular the stability 
properties of the solutions of eqs. (\ref{a6001}) 
depend on the combination $s$ of the parameters $\alpha, \beta$, which characterize the personality 
profile of the players, rather than separately on each of them. The sensitivity parameter $s$ plays an important 
role in the matching law in the operant response theory of the mathematical psychology, in particular as 
a measure of the degree to which, in equilibrium, the response ratio changes when the reinforcement ratio is modified,  
cf. for example \cite{Her, Pla1, Pla2}, and references cited therein.

\section{Equilibria for Two-Person Symmetric Games}

In this section we consider populations which play symmetric 2-person games. For 
a convenience of the reader we first remind the results for the case of two strategies, 
proved in \cite{Pla2}. 

\subsection{2-Person Games with 2 Strategies}\label{model2ch}

For the symmetric 2-person games with two strategies (denoted 1,2), with the  payoff matrix 
\begin{equation}\label{aaa}
\begin{array}{r|cc} 
        & 1 & 2 \\ \cline{1-3} \hline 
        1 & a & b \\ 
        2 & c & d
\end{array} 
\end{equation}
where $a, b, c, d$  are arbitrary positive numbers, we obtain the 
full characterization of the equilibria. With the normalization condition $p_1+p_2=1$ eq. (\ref{a2}) reduces to the evolution equation 
\begin{equation}
\dot p_1=(1-p_1)^{1-\alpha}p_1^{1-\alpha}[\nu_1^{1-\beta}(1-p_1)^\alpha-\nu_2^{1-\beta}p_1^\alpha],  \label{a300}
\end{equation}
with $\nu_1=(a-b)p_1+S, \ \ \nu_2=(c-d)p_1+d. $ For each personality profile $(\alpha,\beta)\in [0,1] \times [0,1]$ 
there exist two pure equilibria 
of (\ref{a2}): $p_1=0$ and $p_1=1$ [for $\alpha=1$ we put $u_i(p_i=0)=0, \ i=1,2$]. The equilibria of eq. (\ref{a2}) in 
which each strategy has a non zero frequency  will be called {\bf{Mixed Equilibria}}, 
and denoted ME. Each ME corresponds to a fixed point of eq. (\ref{a300}). 
For the symmetric 2-person games the following theorem is true:  

\begin{tw} $ \ $

For the payoff matrix (\ref{aaa}) with positive entries:  

\vskip 0.2cm
I. For each $ 0 \le s < \infty$ there exists at least one ME -- the fixed point of
 the evolution equation (\ref{a300}).  

\vskip 0.1cm 
II. Denote  
$B:=(1-s)\frac{d}{c} + (1+s)\frac{b}{a}, \ \ \Delta:=B^2-4\frac{bd}{ac}. $ 
If 1. $\Delta \le 0,$ or 2. $B \ge 0$, or 3. $bc \ge ad$, then ME is unique.   

\vskip 0.1cm
III. For each $ 0 \le s < \infty$ there exist at most three ME. 

\vskip 0.1cm
IV. There  exist three ME iff 
$\Delta > 0, \ \ B<0, \ \mbox{and} \  U(z_1) U(z_2)<0, $ where 
$z_{1,2} := [-B \mp \sqrt{\Delta}]/2,$ and  
$U(z):=ln z +s \ ln{\frac{cz+d}{az+b}}, \ \ z>0. $  

\vskip 0.1cm
V. If a ME is unique, then it is globally stable in $(0, 1)$ under the considered dynamics. If 
there are three ME, then the middle one is (locally) instable, the other two stable. 
\end{tw}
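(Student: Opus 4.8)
The plan is to reduce everything to the single scalar function $U(z)$ defined in part IV, where $z := p_1/p_2 = p_1/(1-p_1) \in (0,\infty)$ parametrizes the interior of the simplex. The key observation is that the fixed-point condition $u_1/p_1 = u_2/p_2$ from (\ref{a6001}), after substituting the Cobb--Douglas form (\ref{cobb}) and the mean payoffs $\nu_1 = (a-b)p_1 + b$, $\nu_2 = (c-d)p_1 + d$, becomes (taking logarithms) exactly $U(z) = 0$. Let me work out the passage to $z$: writing $p_1 = z/(1+z)$, the payoffs become $\nu_1 = (az+b)/(1+z)$ and $\nu_2 = (cz+d)/(1+z)$, so the ratio $\nu_1/\nu_2 = (az+b)/(cz+d)$, and the equilibrium condition $p_1/p_2 = (\nu_1/\nu_2)^s$ reads $z = \big((az+b)/(cz+d)\big)^s$, i.e. $\ln z + s\ln\frac{cz+d}{az+b} = 0$, which is $U(z) = 0$. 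So **counting ME is counting positive roots of $U$, and stability is governed by the sign of $U'$ at those roots.**

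For existence (part I) I would examine the boundary behavior: as $z \to 0^+$, $\ln z \to -\infty$ while the payoff term stays bounded (since $a,b,c,d > 0$), so $U(z) \to -\infty$; as $z \to +\infty$, $\ln z \to +\infty$ and again the payoff term is bounded, so $U(z) \to +\infty$. By the intermediate value theorem $U$ has at least one zero in $(0,\infty)$, giving at least one ME. For the bound of three ME (part III), I would compute $U'(z)$ and show it is a rational function whose numerator is at most quadratic in $z$, so $U'$ has at most two positive roots; hence $U$ is alternately monotone on at most three intervals and can cross zero at most three times. Concretely, $U'(z) = \frac{1}{z} + s\Big(\frac{c}{cz+d} - \frac{a}{az+b}\Big)$, and clearing denominators the sign of $U'$ is controlled by a polynomial $\;ac\,z^2 - \big[(1-s)ad + (1+s)bc\big]z + bd\;$ — rescaling, this is $ac(z^2 - Bz + bd/(ac))$ with roots $z_{1,2} = [B \mp \sqrt{\Delta}]/2\cdot$(sign details), matching the $B, \Delta$ of part II. This is where the apparent sign discrepancy between my $B$ and the stated $z_{1,2}=[-B\mp\sqrt\Delta]/2$ must be reconciled by tracking the division by $p_1$ versus $p_2$ carefully.

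For uniqueness (part II) the three sufficient conditions are precisely the cases where the quadratic $U'$ numerator has no positive double-crossing: condition 1 ($\Delta \le 0$) means the quadratic has no real roots, so $U'$ has constant sign and $U$ is strictly monotone, forcing a unique zero; condition 2 ($B \ge 0$ — with the correct sign convention) together with positivity of the constant term $bd/(ac) > 0$ means both quadratic roots are negative or complex, so again no sign change of $U'$ on $(0,\infty)$; condition 3 ($bc \ge ad$) I would show makes the payoff-ratio term monotone in the same direction as $\ln z$, again precluding a second crossing. For the three-equilibria characterization (part IV) the conditions $\Delta > 0$, $B < 0$ guarantee two distinct positive critical points $z_1 < z_2$ of $U$ (a local max then local min, given the $-\infty/+\infty$ boundary behavior), and then $U(z_1)U(z_2) < 0$ is the exact necessary-and-sufficient condition that the local max is positive and the local min negative, producing three transversal crossings.

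For stability (part V) I would linearize (\ref{a300}) at a fixed point, or more cleanly relate the sign of $\dot p_1$ to the sign of $U(z)$: since the bracketed factor in (\ref{a300}) vanishes exactly when $U(z)=0$ and the prefactor $(1-p_1)^{1-\alpha}p_1^{1-\alpha}$ is strictly positive on $(0,1)$, the flow direction on each subinterval between consecutive equilibria is determined by $\text{sgn}\,U$. A fixed point is stable iff $U$ crosses from $+$ to $-$ as $z$ increases (equivalently $U' < 0$ there), unstable if from $-$ to $+$. In the unique case the single crossing is $-\!\to\!+$ in the $U$-variable but — after accounting for the correspondence between increasing $z$ and increasing $p_1$ and the overall sign — yields an attractor that is globally stable on $(0,1)$ because $\dot p_1$ has a single sign change. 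In the three-ME case the boundary signs of $U$ force the outer two crossings and inner crossing to alternate in slope, so the outer two are stable and the middle unstable. \textbf{The main obstacle I anticipate is bookkeeping the sign conventions} — the direction of the inequality in $z = p_1/p_2$ versus $p_2/p_1$, and consequently the precise sign of $B$ and of $z_{1,2}$ — so that the three sufficient conditions in part II and the root formula in part IV come out with exactly the signs stated; the analytic content (boundary limits, one monotone factor, quadratic numerator of $U'$) is routine once that convention is fixed.
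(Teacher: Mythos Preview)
Your approach is correct and matches the paper's method: the paper itself defers this proof to reference~\cite{Pla2}, but the identical $U(z)$ technique (zeros of $U$ as equilibria, quadratic numerator of $U'$ bounding their number, sign of $U$ determining the flow) is what the paper deploys in Theorem~\ref{maxstac} and Lemma~\ref{wielo} for the multi-person case, which specializes to exactly this argument when $n=1$.

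The ``sign discrepancy'' you flag is a slip in your own computation, not a convention issue: clearing denominators in
\[
U'(z)=\frac{1}{z}+\frac{s(bc-ad)}{(cz+d)(az+b)}
\]
gives numerator $ac\,z^{2}+\bigl[(1-s)ad+(1+s)bc\bigr]z+bd = ac\bigl(z^{2}+Bz+\tfrac{bd}{ac}\bigr)$ with a \emph{plus}~$B$, so the critical points are exactly $z_{1,2}=(-B\mp\sqrt{\Delta})/2$ as stated, and conditions II.1--II.3 follow at once (for II.3, $bc\ge ad$ makes the second summand in $U'$ nonnegative, hence $U'>0$ identically). One more sign to fix in part~V: since the bracket in (\ref{a300}) is positive iff $(\nu_{1}/\nu_{2})^{s}>z$, one has $\dot p_{1}>0\iff U(z)<0$, so a crossing of $U$ from $-$ to $+$ (i.e.\ $U'>0$) is the \emph{stable} case --- which is consistent with the conclusions you state, just not with the intermediate sentence.
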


\vskip 0.1cm
The statement I implies that the dependence of the attractiveness of a strategy on its popularity  
implies the existence of at least one ME for all symmetric two-person games with positive payoffs, 
including all types of the social dilemma games. In particular this "solves the dilemma"  of the Prisoner's 
Dilemma population game. The statement II.2 implies that for $s \le 1$ the cooperation level is 
unique, and does not depend on the initial distribution of strategies.  
For the proof, interpretations and applications to various types of 2-person games 
the reader is referred to \cite{Pla2}.

\subsection{2-Person Symmetric Games with 3 Strategies}\label{model2ach}

\subsubsection{General results}

2-person symmetric games with 3 strategies play important role in the population game theory. 
First we formulate the general setting in the frame of populations with complex personality profiles, 
then we discuss particular types of such games.  

The general symmetric 2-person game with 3 strategies, denoted 1, 2, 3, has the payoff matrix 
\begin{equation}\label{3matrix}
	\begin{array}{r|ccc}
		& 1 & 2 & 3 \\ \hline
		1 & a_{11} & a_{12} & a_{13} \\
		2 & a_{21} & a_{22} & a_{23} \\
		3 & a_{31} & a_{32} & a_{33}
	\end{array}
\end{equation}
where we assume $a_{ij}>0 \ \forall i,j \in \{1, 2, 3\}.$ The mean payoffs of strategies $1,2,3$ are defined as
\begin{align}\label{3mean} 
\nu_i(t) &= \sum_{j=1}^{3} a_{ij}p_j(t),  \quad i=1,2,3,
\end{align}
with $p_3=1-p_1-p_2$. The dynamical system (\ref{a2}) has the form 
\begin{align}\label{2dynamic} 
\dot p_1 &= p_1^{1-\alpha}(1-p_1)\nu_1^{1-\beta}-p_1[p_2^{1-\alpha}\nu_2^{1-\beta}+(p_3)^{1-\alpha}\nu_3^{1-\beta}],   \\
\dot p_2 &= p_2^{1-\alpha}(1-p_2)\nu_2^{1-\beta}-p_2[p_1^{1-\alpha}\nu_1^{1-\beta}+(p_3)^{1-\alpha}\nu_3^{1-\beta}]. 
\end{align}
For the 3-strategy games we define mixed equilibria (ME) of (\ref{2dynamic}) as the critical points of 
the above dynamical system with all nonzero coordinates $p_i, \ i=1, 2, 3$. We shall alternatively use the term 
{\it{polymorphic equilibria}}. With notation 
\begin{equation}
x:=\frac{p_2}{p_1}, \quad y:=\frac{p_3}{p_1}
\end{equation}
the equations for ME can be written, using (\ref{cobb}), (\ref{a6001}), (\ref{3mean}), in the form 
\begin{align}
x &=\left(\frac{a_{21}+a_{22}x+a_{23}y}{a_{11}+a_{12}x+a_{13}y} \label{alg1}\right)^s,  \\
y &=\left(\frac{a_{31}+a_{32}x+a_{33}y}{a_{11}+a_{12}x+a_{13}y} \label{alg2}\right)^s. 
\end{align}
The frequencies of the strategies in ME are obtained from the reverse formulas: 
\begin{equation}\label{revers}
p_1=\frac{1}{1+x+y}, \ \ p_2=p_1 x, \ \ p_3=p_1 y. 
\end{equation}
In general the number of ME and their stability depend on the payoff matrix and on the sensitivity parameter $s$. 

One of the most celebrated 3-strategy games is the Rock-Paper-Scissor (RPS) game, cf. for example \cite{HofSig,Weibull}. 
The population of players with complex personality profiles, playing the RSP game 
has been investigated in \cite{Pla4}. In particular, in the "standard" RPS game the critical point 
$(\frac{1}{3}, \frac{1}{3}, \frac{1}{3})$ is asymptotically~stable for all  $\alpha, \beta  \in (0,1]$. 
For other properties of the solutions, in particular the existence of limit cycles, the Hopf and the Boutin--like 
bifurcations, the reader is referred to \cite{Pla4}.     

Below we analyze other important classes of 3--strategy symmetric games, in particular the coordination games, 
for which there exist multiple stable polymorphic equilibria, and the iterated Prisoner's Dilemma game with three strategies 
AllD, AllC, TFT, with unique stable ME.

\subsubsection{Coordination Game}

We consider the coordination game with 3 strategies and the payoff matrix 
\begin{equation}\label{33matrix}
	\begin{array}{r|ccc}
		& 1 & 2 & 3 \\ \hline
		1 & 2 & 1 & 1 \\
		2 & 1 & 3 & 1 \\
		3 & 1 & 1 & 3
	\end{array}
\end{equation}
We demonstrate the existence of three ME for $s=3$.  
For $s=1$ the system (\ref{alg1}), (\ref{alg2}) can be solved explicitly. 
The unique nonnegative solution is $x=y=\frac{1}{2}(1+\sqrt{3}),$ which, according to (\ref{revers}), 
gives the equilibrium state $(p_1, p_2, p_3)$ with the frequencies  
$p_1 = 2-\sqrt{3}, \ \ p_2 = p_3 = \frac{1}{2}(\sqrt{3}-1).$ 

For $s=2$ the system (\ref{alg1}), (\ref{alg2}) is of the third order. In order to obtain analytical solution  
we subtract (\ref{alg2}) from (\ref{alg1}), and obtain the equation 
\begin{equation}\label{rrr}
(x-y)(2+x+y)^2=4(x-y)(1+2x+2y).
\end{equation}
It can be easily shown that (\ref{rrr}) has positive solution only if $y=x$, in which case 
the solution of the system (\ref{alg1}), (\ref{alg2}) is the pair $(x_0,x_0)$, where $x_0\cong 2.45$ is the unique positive 
root of the polynomial 
\begin{equation}
F(x)=4x^3-8x^2-4x-1.  
\end{equation}
The corresponding unique equilibrium state reads $(p_1, p_2, p_3)\cong ( 0.17, 0.415, 0.415).$ 
When we farther increase the value of the sensitivity parameter $s$ the uniqueness is lost. 
In Figure \ref{zak1} we show time evolution of frequencies ($p_1,p_2,p_3$) for two 
values of the sensitivity parameter $s$ for particular initial data, symmetric with respect to $p_{10}$ 
(i.e. for $p_{20}=p_{30}$). For $s=2.15$ the equilibrium is unique (not shown). 
For a value $s \ge s_0 \in (2.15, 2.30) $ there emerge multiple equilibria. For $s=3$ (left diagram), and $s=3.75$ 
(right diagram) there are three ME, two of them stable, one unstable. In 
each diagram the frequencies reach asymptotically one of the stable ME. 
Note the influence of the unstable equilibrium in the early stage of the evolution. First  
the trajectory approaches the unstable symmetric ME 
$(p_1, p_2, p_3)\cong (0.14, 0.43, 0.43)$, then it 
tends to the locally stable equilibrium:  $(p_1, p_2, p_3)\cong (0.126, 0.225, 0.649)$. 
The other "symmetric" locally 
stable ME is $(p_1, p_2, p_3)\cong ( 0.126, 0.649, 0.225)$.  The amount of time the trajectory stays in a neighborhood 
of the unstable equilibrium depends on the sensitivity parameter $s$, and on the personality profile 
parameters $\alpha,  \beta$. For $s=3.75$ both locally stable ME tend to the relevant 
vortexes of the game simplex, cf. the right diagram in Figure \ref{zak1}. 

We explain the emergence of the multiple equilibria for $s=3$.  
First we look for the symmetric solution $p_2 = p_3$, i.e. $x=y$. The system (\ref{alg1}), (\ref{alg2})   
reduces to one equation
\begin{equation}\label{funkcjaFsym}
8x^4-40x^3-24x^2-4x-1=0.
\end{equation}
The unique positive root of the polynomial in (\ref{funkcjaFsym}) is $x \cong 5.556$. The corresponding 
equilibrium reads $(p_1, p_2, p_3) \cong (0.082, 0.459, 0.459)$. 
There are two other, symmetric  ME. The first zero gives the ME:    
$(p_1, p_2, p_3) \cong (0.047, 0.057, 0.896)$. As expected from the symmetry of the second and third strategy, the 
second zero gives the "symmetric" ME: $(p_1, p_2, p_3) \cong (0.047, 0.896, 0.057)$. There are also four "partially mixed" 
equilibria: two on the edge $p_1=0$, and one on each other edges, in the neighborhood of the corners 2 and 3, i.e. 
near $p_2=1$ and $p_3=1$. For increasing values of the sensitivity parameter $s$ the stable 
equilibria approach the relevant edges of the game simplex. 
\begin{figure}[!hbt]
\includegraphics[width=0.35\textwidth,angle=-90]{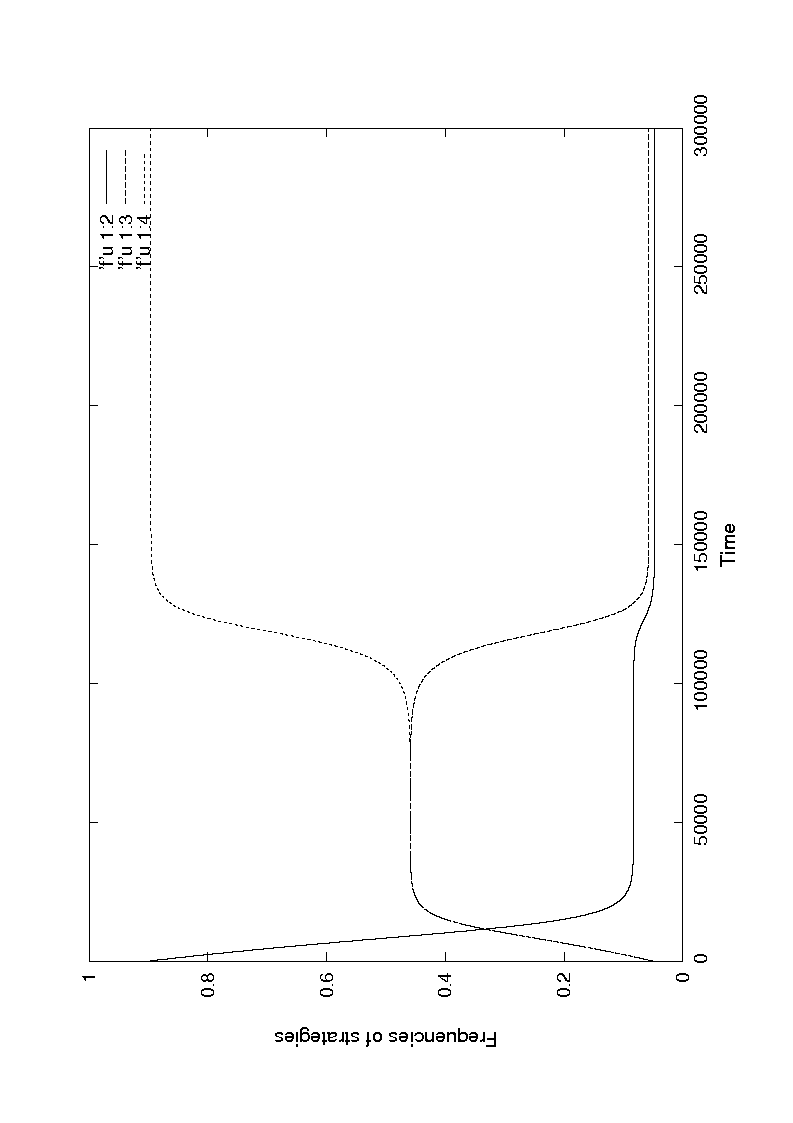}
\includegraphics[width=0.35\textwidth,angle=-90]{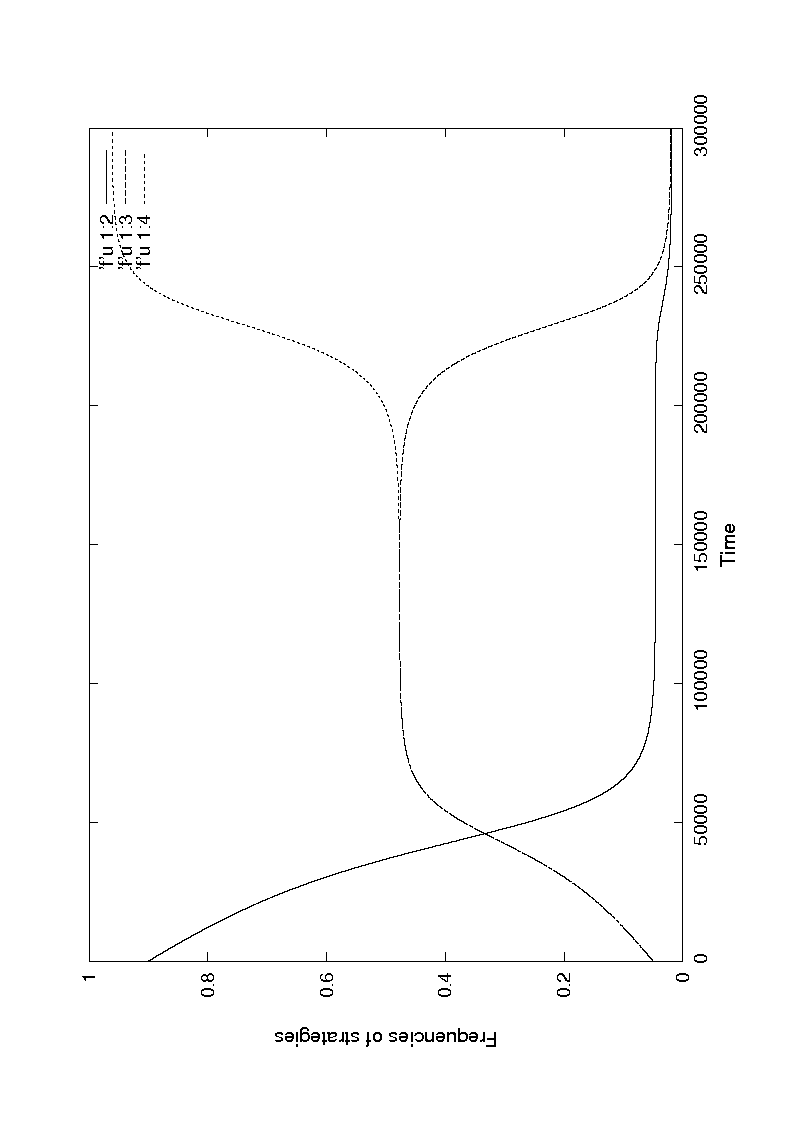}
\caption{Time evolution of frequencies $p_1,p_2,p_3$ for the coordination game (\ref{33matrix}). Left: $s=3$, right: $s=3.75$. \label{zak1}}
\end{figure}
In the Appendix, p. B we prove the existence and uniqueness theorem for another class of the 
asymmetric coordination games, 
in which both players are better off if they play different strategies (anti--coordination games).

\subsubsection{Repeated Prisoner's Dilemma Game}

Another interesting example of three--strategy games is the 2--person, ~3--strategy finitely repeated Prisoner's Dilemma 
Game with the payoff matrix 
\begin{equation}\label{3matrix_PD}
	\begin{array}{r|ccc}
		& AllC & AllD & TFT \\ \hline
		AllC & Rm & Sm & Rm \\ 
		AllD & Tm & Pm & T+P(m-1) \\
		TFT & Rm & S+P(m-1) & Rm
	\end{array}
\end{equation}
where $T>R>P\ge S$ are the payoffs in the one-shot PD game, AllC (AllD) is the strategy:
 play always C (play always D), TFT is the strategy Tit for Tat, and $m$ is the number of rounds.  
 
In the classical replicator dynamics, with the payoff from TFT additionally reduced by a small value 
(the cost of playing TFT) it has been shown 
in \cite{FudNow}, that the population evolves through the cycles of cooperation and 
defection. 

For the general matrix (\ref{3matrix_PD}) we checked numerically the existence and uniqueness of ME for 
for wide ranges of the payoffs $T> R> P \ge S$ and of the parameters $m, s$. 
Moreover, the share (sum of the frequencies) of the cooperative strategies  AllC 
and TFT decreases for increasing ratio $\frac{T}{R}$ and increasing number 
of rounds, whereas for increasing $m$ it increases, in agreement with intuition.  

For the Weak Prisoner's Dilemma ($P=S=0$) we prove 
\begin{cor} $ \ $

There exists an unique ME: 
$(p_1, p_2, p_3)=(\frac{1}{2+a}, \frac{a}{2+a}, \frac{1}{2+a})$, 
where $a=[\frac{T(m+1)}{2mR}]^s $, 
of the game (\ref{3matrix_PD}) with $P=S=0$, for all nonnegative sensitivity parameters $s$. 
\end{cor}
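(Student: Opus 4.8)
The plan is to exploit the degenerate structure of the Weak Prisoner's Dilemma payoff matrix, which makes the equilibrium system (\ref{alg1})--(\ref{alg2}) decouple and become explicitly solvable. First I would substitute $P=S=0$ into (\ref{3matrix_PD}) and read off the two structural features that drive the whole argument. The AllD column $a_{i2}$ vanishes identically, so by (\ref{3mean}) none of the mean payoffs $\nu_i$ depends on $p_2$; and the AllC and TFT rows coincide, so that $\nu_1(t)=\nu_3(t)=Rm(p_1+p_3)$ as functions of the state, while $\nu_2(t)=T(mp_1+p_3)$. These two identities, not merely coincidences at a fixed point, are the crux.

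The second step feeds these facts into the ratio form of the ME conditions, $x=(\nu_2/\nu_1)^s$ and $y=(\nu_3/\nu_1)^s$, i.e. equations (\ref{alg1})--(\ref{alg2}). The identity $\nu_1\equiv\nu_3$ forces $y=p_3/p_1=1$ at once, for every $x$ and every $s$. Substituting $y=1$ into (\ref{alg1}) and using that $\nu_2/\nu_1$ does not involve $x$ (the zero AllD column), the ratio collapses to the constant $\nu_2/\nu_1=T(m+1)/(2Rm)$, whence $x=[T(m+1)/(2Rm)]^s=a$. The reverse formulas (\ref{revers}) then deliver the claimed frequencies $(p_1,p_2,p_3)=(\frac{1}{2+a},\frac{a}{2+a},\frac{1}{2+a})$. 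Existence and uniqueness follow because each unknown is pinned down with no residual freedom: $y=1$ is forced by the row degeneracy and $x=a$ by the column degeneracy, for all $s\ge 0$. I would then check positivity, which makes the point a genuine interior (polymorphic) equilibrium: since $a>0$ for every finite $s$, all three frequencies lie strictly in $(0,1)$, and at the candidate state $\nu_1=\nu_3=Rm(p_1+p_3)>0$ and $\nu_2=T(mp_1+p_3)>0$, so the attractivenesses $u_i=p_i^{1-\alpha}\nu_i^{1-\beta}$ are well defined and positive.

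The one point requiring care---and the reason this is stated separately rather than as an instance of the general three-strategy machinery---is that the standing assumption $a_{ij}>0$ used throughout that analysis fails here, since the Weak PD has an entire zero column. Uniqueness therefore cannot be inherited from any general count of the roots of (\ref{alg1})--(\ref{alg2}); instead the very degeneracy that violates the positivity hypothesis is what decouples the system and yields uniqueness by direct solution. I expect the only subtlety worth spelling out to be the exclusion of spurious solutions with $y\ne 1$, which is immediate because $\nu_1\equiv\nu_3$ holds as an exact identity in the state variables rather than as an equilibrium relation, so (\ref{alg2}) leaves no alternative branch.
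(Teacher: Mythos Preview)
Your proof is correct and follows essentially the same route as the paper's own argument: use equation (\ref{alg2}) to get $y=1$ from the identity $\nu_1\equiv\nu_3$, substitute into (\ref{alg1}) to obtain $x=a$, and recover the frequencies via (\ref{revers}). The paper's proof is a two-line version of exactly this; your additional commentary on why the degeneracy (zero AllD column, identical AllC/TFT rows) both violates the standing positivity hypothesis and simultaneously forces uniqueness is accurate and adds helpful context, but the underlying computation is the same.
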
 
\begin{proof}
Eq. (\ref{alg2}) for the ME gives $y=1$. Eq. (\ref{alg1}) gives 
$a=[\frac{T(m+1)}{2m}]^s$, 
and the result follows from (\ref{revers}).
\end{proof}
Note that for the increasing ratio $\frac{T}{R}$ the shares of the AllC and TFT players in ME decrease, whereas for 
increasing $m$ they increase, as in the case of the general payoff matrix (\ref{3matrix_PD}).

\section{2-Person Asymmetric Games}\label{model3ch}

\subsection{General Results}

In this section we consider 2-person games played between members of two populations with (in general) different 
personality profiles. The members of both populations choose between two strategies $A$ and $B$. The payoff matrix reads 
\begin{equation}
	\left[\begin{array}{cc} (a_1,a_2) & (b_1,b_2) \\ (c_1,c_2) & (d_1,d_2)\end{array}\right], 
\end{equation}
with nonnegative entries. In general the strategies in both populations may be different meanings. 
For simplicity we do not use different notation for the relevant pairs of strategies. 
The row players belong to the population~$i=1$, and the 
column ones to ~$i=2$. We denote  $x_i,\ i=1,2,$ the fraction of population $i$ which plays strategy $A$. 

Let $(\alpha_i,\beta_i)$ describe the personality profile of the individuals of population $i, \ i=1.2$. The members of 
each population may put different weights to the payoffs and popularities of the available strategies in the imitation process.  

We define $u_j^i$--the attractiveness of strategy $j \in \{A,B\}$ in population $i \in \{1,2\}$: 
\begin{eqnarray} \label{attra}
	u_A^i & = & x_i^{1-\alpha_i}\nu_{Ai}^{1-\beta_i},\\ u_B^i & = & (1-x_i)^{1-\alpha_i}\nu_{Bi}^{1-\beta_i},
\end{eqnarray}
where $\nu_{ji}$ is the mean payoff from strategy $j$ in population $i$: 
$$\nu_{A1}=a_1x_2+b_1(1-x_2), \ \  
\nu_{B1}=c_1x_2+d_1(1-x_2), \ \ \nu_{A2}=a_2x_1+c_2(1-x_1), \ \nu_{B2}=b_2x_1+d_2(1-x_1).$$ 
Thus, the attractiveness of each strategy of a population depends on two factors: the ''social'' one, 
represented by the fraction, popularity of the strategy in the population, and the ''economic'' one, represented by the 
mean payoff of this strategy in the considered population. Note that the social factor depends on the composition of 
the considered population, whereas the economic one depends on the composition of the second population.  
This can be interpreted in the following way: the members of each population determine their strategy choice  
observing popularity of the available strategies in their own population, and their payoffs from the 
interactions with the members of the other population. 

The dynamics (\ref{a2}), written for asymmetric games, reads 
\begin{equation}\label{systempocz}
\begin{array}{rcl}
	\dot{x}_1 &=& (1-x_1)x_1^{1-\alpha_1}\nu_{A1}^{1-\beta_1} - x_1(1-x_1)^{1-\alpha_1}\nu_{B1}^{1-\beta_1}, \\
	\dot{x}_2 &=& (1-x_2)x_2^{1-\alpha_2}\nu_{A2}^{1-\beta_2} - x_2(1-x_2)^{1-\alpha_2}\nu_{B2}^{1-\beta_2}.
\end{array}
\end{equation}
Note that for $\alpha_i=\beta_i=0, i=1,2$ the system (\ref{systempocz}) reduces to the replicator dynamics for 
asymmetric games, cf. for example \cite{HofSig, Weibull}. The pairs of $(x_1, x_2)$: $(0,0), (0,1), (1,0), (1,1)$  are  equilibria in pure strategies of (\ref{systempocz}). They 
correspond to the situations in which each population plays only one of two available 
strategies. We are looking for the mixed equilibria (ME) in which both strategies have 
nonzero frequencies for each population.  

With the substitution $z_i=\frac{x_i}{1-x_i}, \ i=1,2$ the system reads
\begin{equation}\label{dyn2z}
	\begin{array}{rcl}
	\dot{z_1} &=& z_1^{1-\alpha_1}(1+z_1)^{\alpha_1}\nu_{A1}^{1-\beta_1} - z_1(1+z_1)^{\alpha_1}\nu_{B1}^{1-\beta_1}, \\[2mm]
	\dot{z_2} &=& z_2^{1-\alpha_2}(1+z_2)^{\alpha_2}\nu_{A2}^{1-\beta_2} - z_2(1+z_2)^{\alpha_2}\nu_{B2}^{1-\beta_2}.
	\end{array}
\end{equation}
From~(\ref{dyn2z}) we obtain the algebraic equations for ME:
\begin{equation}\label{zera2z}
	\begin{array}{rcl}
	\bar{z}_1 = \left(\frac{a_1\bar{z}_2+b_1}{c_1\bar{z}_2+d_1}\right)^{s_1}, \quad
	\bar{z}_2 = \left(\frac{a_2\bar{z}_1+b_2}{c_2\bar{z}_1+d_2}\right)^{s_2},
	\end{array}
\end{equation}
where  
\begin{equation}\label{2sens}
s_i=\frac{1-\beta_i}{\alpha_i}, \quad i=1,2
\end{equation}
denotes sensitivity of players in population $i$. Thus, for the asymmetric 2-person games the ME  
are determined by relevant combinations of parameters $s_1, s_2$ which describe the 
personalities of players respectively in the first and the in the second population. 
In the next subsections we show the existence of ME for typical asymmetric games. 
Here we prove that, contrary to the replicator equations for asymmetric 2-person games, the solutions 
can not be periodic.  
\begin{tw}\label{nieokr}
The system (\ref{dyn2z}) does not have periodic solutions. 
\end{tw}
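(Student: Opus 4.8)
The plan is to rule out closed orbits by the Bendixson--Dulac criterion applied to the planar field (\ref{dyn2z}) on the open first quadrant $D=\{(z_1,z_2):z_1>0,\ z_2>0\}$, which is where any genuinely mixed trajectory lives: the coordinate axes $z_i=0$ correspond to pure strategies and are invariant, so a periodic solution with all frequencies nonzero must lie entirely in $D$. Since $D$ is simply connected, it will be enough to produce a positive $C^1$ weight $B(z_1,z_2)$ for which the divergence $\partial_{z_1}(Bf_1)+\partial_{z_2}(Bf_2)$ keeps a fixed strict sign on $D$, where $f_1,f_2$ denote the right-hand sides of (\ref{dyn2z}).

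The first step is to strip off the positive ``diagonal'' prefactors that obscure the structure. Writing the first equation as $f_1=(1+z_1)^{\alpha_1}z_1^{1-\alpha_1}\bigl[\nu_{A1}^{1-\beta_1}-z_1^{\alpha_1}\nu_{B1}^{1-\beta_1}\bigr]$ and recalling the decisive fact that $\nu_{A1},\nu_{B1}$ depend on $z_2$ only (and $\nu_{A2},\nu_{B2}$ on $z_1$ only), the natural Dulac function is
\[
B(z_1,z_2)=\frac{1}{(1+z_1)^{\alpha_1}z_1^{1-\alpha_1}\,(1+z_2)^{\alpha_2}z_2^{1-\alpha_2}}>0 .
\]
With this choice the awkward factor cancels, leaving $Bf_1=\bigl[\nu_{A1}^{1-\beta_1}-z_1^{\alpha_1}\nu_{B1}^{1-\beta_1}\bigr]\big/\bigl[(1+z_2)^{\alpha_2}z_2^{1-\alpha_2}\bigr]$, whose entire dependence on $z_1$ sits in the single monotone term $-z_1^{\alpha_1}\nu_{B1}^{1-\beta_1}$. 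A one-line differentiation then gives
\[
\frac{\partial (Bf_1)}{\partial z_1}=\frac{-\alpha_1\, z_1^{\alpha_1-1}\,\nu_{B1}^{1-\beta_1}}{(1+z_2)^{\alpha_2}z_2^{1-\alpha_2}}<0 ,
\]
and by the symmetric computation $\partial_{z_2}(Bf_2)<0$ as well, provided $\alpha_i>0$ (exactly the condition making the sensitivities $s_i$ of (\ref{2sens}) well defined) and $\nu_{Bi}>0$ (assured in the interior by positivity of the payoffs). Hence $\nabla\cdot(B\mathbf{f})<0$ throughout $D$.

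Since $B$ is positive and $C^1$ on the simply connected domain $D$ and $\nabla\cdot(B\mathbf{f})$ is nowhere zero, the Bendixson--Dulac theorem forbids any periodic orbit of (\ref{dyn2z}) contained in $D$, which is the assertion. I expect the only real obstacle to be the \emph{discovery} of $B$: the system is a genuinely coupled, non-monotone planar field (the off-diagonal Jacobian entries do not carry fixed signs, so the competitive/cooperative-systems route is unavailable), and the whole argument hinges on the observation that dividing by the common factor $(1+z_i)^{\alpha_i}z_i^{1-\alpha_i}$ collapses each component into an expression strictly monotone in its own variable. Once that is seen the verification is immediate; the only remaining care points are the mild nondegeneracy hypotheses $\alpha_i>0$ and $\nu_{Bi}>0$ and the remark that a mixed periodic orbit cannot reach the invariant boundary.
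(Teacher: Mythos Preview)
Your argument is correct and is essentially the paper's own proof: both apply the Dulac--Bendixson criterion on the open first quadrant with a weight that kills the common prefactor $(1+z_i)^{\alpha_i}$. The only cosmetic difference is that the paper uses $\phi=[z_1(1+z_1)^{\alpha_1}z_2(1+z_2)^{\alpha_2}]^{-1}$, which leaves the $z_1$-dependence in $z_1^{-\alpha_1}\nu_{A1}^{1-\beta_1}$ and hence produces a divergence involving $\nu_{Ai}$, whereas your $B$ cancels $z_i^{1-\alpha_i}$ instead and the surviving term involves $\nu_{Bi}$; the two weights differ by the harmless factor $z_1^{-\alpha_1}z_2^{-\alpha_2}$ and the verification is the same one-line computation in either case.
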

\begin{proof}[Proof]
We define the function $\phi(z_1,z_2)=[z_1(1+z_1)^{\alpha_1}z_2(1+z_2)^{\alpha_2}]^{-1}$, and note that 
	$$
		\frac{\partial(\phi \dot{z}_1)}{\partial z_1} + \frac{\partial(\phi \dot{z}_2)}{\partial z_2} = -\frac{\alpha_1z_1^{-(\alpha_1+1)}\nu_{A1}^{1-\beta_1}}{z_2(1+z_2)^{\alpha_2}} - \frac{\alpha_2z_2^{-(\alpha_2+1)}\nu_{A2}^{1-\beta_2}}{z_1(1+z_1)^{\alpha_1}} < 0,
	$$
	for $z_1, z_2>0$. Applying the criterion ~\emph{Dulac--Bendixson} we obtain the thesis. 
\end{proof}

\subsection{Asymmetric coordination games}

\subsubsection{Pure coordination games}

We consider the model defined in the previous section for the 
asymmetric coordination games with the payoff matrix 
\begin{equation}
	\left[\begin{array}{cc} (a_1,a_2) & (0,0) \\ (0,0) & (d_1,d_2)\end{array}\right], \label{coordmatrix}
\end{equation} 
and $a_i \neq 0, \ d_i \neq 0, \ i=1, 2,$ played between two populations with 
different personality profiles. Such games will be called pure (asymmetric) coordination games. 
We can for example think about two populations of different sexes, 
$i=1$ and~$i=2$ denote respectively men's and woman's populations. For $a_1 > d_1, \ a_2 < d_2$ this corresponds to the 
well known Battle of Sexes game. Players from each population evaluate 
the available strategies according to their attractiveness, calculated for each of 
two populations $i=1,2$ with the personality profiles  ($\alpha_i, \beta_i$),  
according to formulas (\ref{attra}). With notation (\ref{2sens}) we prove the following theorem
\begin{tw} $ \ $ \label{twistnienie}

If $s_1s_2\neq 1$ then there exists a unique ME of the dynamics (\ref{dyn2z}) for the populations 
with the payoff matrices (\ref{coordmatrix}): 
\begin{equation}\label{stackoord}
	\bar{z}_1 = \left(\frac{a_1a_2^{s_2}}{d_1d_2^{s_2}}\right)^{\frac{s_1}{1-s_1s_2}}, \quad 
    \bar{z}_2 = \left(\frac{a_1^{s_1}a_2}{d_1^{s_1}d_2}\right)^{\frac{s_2}{1-s_1s_2}}.
\end{equation}	
The equilibrium (\ref{stackoord}) is locally asymptotically stable if $s_1s_2<1$, and unstable if $s_1s_2>1$.

If $s_1s_2 = 1$ then for $ (a_1/b_1)^{s_1} a_2^{s_2} \neq 1$ there are no ME, for  
$ (a_1/b_1)^{s_1} a_2^{s_2} = 1$  there exists the family of ME 
$\bar{z}_1 = c, \quad \bar{z}_2 = c^{s_1} (a_2/b_2)^{s_2}$, where $c$ is an arbitrary positive constant. 
\end{tw}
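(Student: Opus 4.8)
The plan is to convert the fixed-point problem into a linear one by passing to logarithmic coordinates. First I specialize the general equilibrium equations (\ref{zera2z}) to the coordination matrix (\ref{coordmatrix}), i.e. I set the off-diagonal payoffs $b_1=b_2=c_1=c_2=0$. Then the mean payoffs satisfy $\nu_{A1}/\nu_{B1}=a_1\bar z_2/d_1$ and $\nu_{A2}/\nu_{B2}=a_2\bar z_1/d_2$, so the system collapses to
\begin{equation*}
\bar z_1=\Bigl(\tfrac{a_1\bar z_2}{d_1}\Bigr)^{s_1},\qquad \bar z_2=\Bigl(\tfrac{a_2\bar z_1}{d_2}\Bigr)^{s_2}.
\end{equation*}
Writing $L_i=\ln\bar z_i$ turns this into the linear system $L_1-s_1L_2=s_1\ln(a_1/d_1)$ and $-s_2L_1+L_2=s_2\ln(a_2/d_2)$, whose coefficient matrix has determinant $1-s_1s_2$. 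When $s_1s_2\neq1$ this determinant is nonzero, so there is exactly one solution; solving it and exponentiating yields precisely the formulas (\ref{stackoord}), which are automatically positive and hence give a genuine ME. This settles existence and uniqueness simultaneously.

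For the degenerate case $s_1s_2=1$ the coefficient matrix is singular and the two equations have proportional left-hand sides. Eliminating $L_1$ leaves a single scalar compatibility condition on the payoffs. If it fails, the system is inconsistent and there is no ME; if it holds, the two equations coincide and the solution set is the one-parameter curve $\bar z_1=c$, $\bar z_2=(a_2c/d_2)^{s_2}$ with $c>0$ arbitrary, giving the stated family. (I note the condition and the family are stated in the excerpt with $b_i$ in place of $d_i$, which appears to be a transcription slip, since $b_i=0$ for this matrix.)

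For stability I linearize (\ref{dyn2z}) at a ME. Factoring $f_1=(1+z_1)^{\alpha_1}h_1$ with $h_1=z_1^{1-\alpha_1}\nu_{A1}^{1-\beta_1}-z_1\nu_{B1}^{1-\beta_1}$, and using that $h_1=0$ at equilibrium, only the derivatives of $h_1$ survive. Substituting the equilibrium identity $z_1^{-\alpha_1}\nu_{A1}^{1-\beta_1}=\nu_{B1}^{1-\beta_1}$ gives $\partial h_1/\partial z_1=-\alpha_1\nu_{B1}^{1-\beta_1}<0$, and feeding in $\nu_{A1}=a_1z_2/(1+z_2)$, $\nu_{B1}=d_1/(1+z_2)$ reduces the coupling term to $\partial h_1/\partial z_2=(1-\beta_1)z_1\nu_{B1}^{1-\beta_1}/z_2>0$; the entries for $h_2$ follow by symmetry. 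Thus the Jacobian $J$ has negative diagonal and positive off-diagonal entries, so $\tr J<0$ unconditionally, and the whole question reduces to the sign of $\det J$.

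The decisive computation is that the messy entries combine cleanly: from the above, $J_{12}/(-J_{11})=s_1z_1/z_2$ and $J_{21}/(-J_{22})=s_2z_2/z_1$, so
\begin{equation*}
\frac{J_{12}J_{21}}{J_{11}J_{22}}=\Bigl(s_1\tfrac{z_1}{z_2}\Bigr)\Bigl(s_2\tfrac{z_2}{z_1}\Bigr)=s_1s_2,
\end{equation*}
the $z$-factors cancelling. Hence $\det J=J_{11}J_{22}(1-s_1s_2)$ with $J_{11}J_{22}>0$, so $\det J$ has the same sign as $1-s_1s_2$. Together with $\tr J<0$, this yields two eigenvalues with negative real part when $s_1s_2<1$ (asymptotic stability) and a saddle when $s_1s_2>1$ (instability), exactly as claimed. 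The \emph{main obstacle} is precisely this algebraic simplification of $\partial h_i/\partial z_j$: the off-diagonal derivative stays opaque until one uses both the explicit mean-payoff formulas and the relation $h_i=0$, after which the product reduces to $s_1s_2$ — mirroring the fact that the composed map $z_2\mapsto z_1\mapsto z_2$ has derivative $s_1s_2$ at the fixed point. Everything else is linear algebra in the log variables.
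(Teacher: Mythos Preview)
Your proof is correct and follows essentially the same route as the paper: specialize the equilibrium equations (\ref{zera2z}) to the coordination matrix, solve them explicitly, then linearize (\ref{dyn2z}) and read off the sign of the trace and determinant. Your presentation is in fact a bit more complete---the paper writes out the Jacobian matrix and asserts the determinant sign without details, whereas your reduction $J_{12}J_{21}/(J_{11}J_{22})=s_1s_2$ makes the threshold transparent; you also treat the degenerate case $s_1s_2=1$ explicitly (the paper simply omits it) and correctly flag the $b_i$-for-$d_i$ typo in the statement.
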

Proof: cf. the Appendix, p. C. Numerical simulations indicate that if the equilibrium (\ref{stackoord}) is locally stable then it attracts all 
investigated trajectories from the interior of the simplex of the game. 
We note that  the replicator dynamics for the considered asymmetric games 
(which formally coresponds to 
infinite sensitivities in our dynamics, has an unique ME, which is Lapunov stable. 
In Appendix, p. D, we prove analogous existence and uniqueness theorem for another 
class of asymmetric games (anti--coordination games), in which the players are 
better off if they use different strategies.

\subsubsection{Coordination games with multiple ME}\label{ex_2chkoord}

\begin{figure}[!hbt]
	\includegraphics[width=0.5\textwidth,angle=-90]{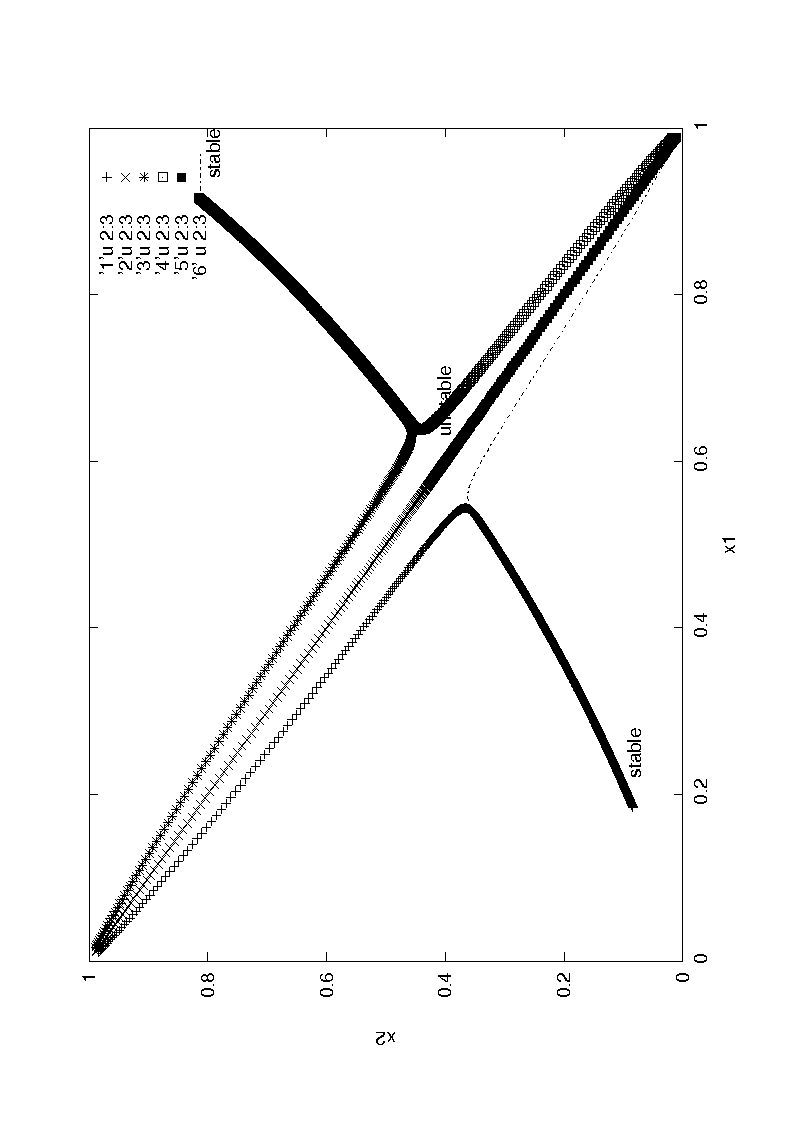}
\caption{Examples of trajectories and critical points for the asymmetric coordination game (\ref{aa1}): 
$a_1=d_2=3, a_2=d_1=2, b_i=c_i=1, i=1,2. $ All six displayed trajectories start from a neighborhood of 
the points $(0,1)$ or $(1,0)$. 
\label{coord1111111}
}
\end{figure}
In asymmetric coordination games the positivity of payoffs plays crucial role in 
the uniqueness problem, and in determining the polymorphic equilibria. In the pure 
coordination games considered above the ME is unique, whereas in general 
coordination games multiple stable polymorphic equilibria are possible.  
For example, the 2-person asymmetric game with the payoff matrix 
\begin{equation} \label{aa1}
	\left[\begin{array}{cc} (3,2) & (1,1) \\ (1,1) & (2,3)\end{array}\right], 
\end{equation}
has, for $s_1=s_2=3$, three ME, corresponding to 
$(x_1 \approx 0.92, x_2 \approx 0.82)$, $(x_1 = 0.6, x_2 = 0.40)$, 
and $(x_1 \approx 0.18, x_2 \approx 0.08)$, as can be checked solving (\ref{zera2z}). Two equilibria are locally stable, one unstable, 
cf. Fig. \ref{coord1111111}.

\section{Multi-person Games}

\subsection{General results} 

In this section we consider multi-person one-shot symmetric games in which each 
of $N$ players  
chooses one of two actions: C or D, C stands for cooperation, 
D for defection. In order to obtain symmetric notation and formulas, 
we define $n:=N-1$, and from now on we consider the $n+1$--person games. 
This allows to define: 
$$
	\left\{\begin{array}{cl}
		a_k: & \textrm{the payoff of player \ } C\textrm{, when k others play \ } C, \\
		b_k: & \textrm{the payoff of player \ } D\textrm{, when k others play \ } D,
	\end{array} \right.  \label{wyplatynnn}
$$
where $k \in \{0,\ \ldots ,\ n\}, $ and $a_k, \ b_k \geq 0$. 

Let $x$ denote the fraction of the population that plays $C$. The mean payoffs from both strategies are respectively
\begin{eqnarray*}
	\nu_C & = & \sum_{k=0}^{n} {n \choose k} x^k(1-x)^{n-k} a_k, \\
	\nu_D & = & \sum_{k=0}^{n} {n \choose k} x^k(1-x)^{n-k} b_k.
\end{eqnarray*}
 
\noindent The evolution equation reads 
\begin{equation}\label{dynnosx}
	\dot{x} = (1-x)x^{1-\alpha}\nu_A^{1-\beta} - x(1-x)^{1-\alpha}\nu_B^{1-\beta},
\end{equation}
Substituting $z=\frac{x}{1-x}$ we obtain
\begin{equation}\label{dynnos}
	\dot{z} = f(z)\left(\left(\frac{W_A(z)}{W_B(z)}\right)^{1-\beta}-z^{\alpha}\right),
\end{equation}
where $W_A(z) = \sum\limits_{k=0}^n{n\choose k}a_kz^k$, $W_B(z) = \sum\limits_{k=0}^n{n\choose k}b_kz^k$, and ~$f(z)$ is a positive function which does not influence the equilibria and their stabilities but only the speed of the evolution. 

Stationary points of the dynamics described by ({\ref{dynnos}}) can be obtained solving the equation 
$$
	\bar{z} = \left(\frac{W_A(\bar{z})}{W_B(\bar{z})}\right)^s,
$$
where $s=\frac{1-\beta}{\alpha}$ is the previously defined sensitivity coefficient. We prove the following
\begin{tw}$  \ $ \label{maxstac}

The dynamics (\ref{dynnosx}) of the symmetric $(n\!+\!1)$--person game has at most 
$2n+1$ mixed equilibria.
\end{tw}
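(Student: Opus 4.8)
The plan is to reduce the count to a statement about the zeros of a single smooth function on $(0,\infty)$ and then bound those zeros by Rolle's theorem. Since $z=\frac{x}{1-x}$ maps the interior $x\in(0,1)$ bijectively onto $z\in(0,\infty)$, the mixed equilibria are exactly the positive roots of $\bar z=(W_A(\bar z)/W_B(\bar z))^s$. Assuming, as for games with nonvanishing payoffs, that $W_A$ and $W_B$ are strictly positive on $(0,\infty)$, I would take logarithms and study
\[
G(z)=\ln z - s\bigl[\ln W_A(z)-\ln W_B(z)\bigr]
\]
on $(0,\infty)$. A point $\bar z>0$ is an ME if and only if $G(\bar z)=0$, so it suffices to show that $G$ has at most $2n+1$ zeros.

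Next I would invoke Rolle's theorem: if $G$ had $N$ distinct zeros $z_1<\dots<z_N$ in $(0,\infty)$, then $G'$ would vanish at least once in each of the $N-1$ intervals between them, so it is enough to bound the number of positive zeros of $G'$ by $2n$. Differentiating gives $G'(z)=\tfrac{1}{z}-s\tfrac{W_A'(z)}{W_A(z)}+s\tfrac{W_B'(z)}{W_B(z)}$, and multiplying through by the strictly positive quantity $z\,W_A(z)\,W_B(z)$ shows that, on $(0,\infty)$, the zeros of $G'$ coincide with the positive roots of the polynomial
\[
P(z)=W_A(z)W_B(z)+s\,z\bigl[W_B'(z)W_A(z)-W_A'(z)W_B(z)\bigr].
\]

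It then remains to bound $\deg P$. The first term $W_A W_B$ has degree at most $2n$. In the bracketed term the two leading monomials cancel: both $W_B'W_A$ and $W_A'W_B$ have leading term $n\,a_n b_n z^{2n-1}$, so their difference has degree at most $2n-2$ and $z\bigl[W_B'W_A-W_A'W_B\bigr]$ has degree at most $2n-1$. Hence $P$ has degree exactly $2n$, with leading coefficient $a_n b_n\neq 0$ independent of $s$. Consequently $P$ has at most $2n$ roots in $(0,\infty)$, so $G'$ has at most $2n$ positive zeros and $G$ at most $2n+1$, which is the claim. I expect the only delicate point to be the bookkeeping that keeps $\deg P=2n$ rather than larger (the cancellation of the $s$-dependent leading terms), together with the standing positivity of $W_A,W_B$ that legitimizes both the logarithm and the clearing of denominators; should a leading payoff vanish, the degree only drops and the bound still holds.
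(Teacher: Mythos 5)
Your argument is essentially the paper's own proof: both reduce the count of mixed equilibria to the zeros of $G(z)=\ln z - s\ln\frac{W_A(z)}{W_B(z)}$ (the paper uses $U=-G$), clear denominators in $G'$ to obtain a polynomial of degree at most $2n$, and conclude via Rolle's theorem that $G$ has at most $2n+1$ zeros. The only cosmetic difference is your remark about the cancellation of the $s$-dependent leading terms, which is not actually needed for the degree bound, since each of the three summands in $P$ already has degree at most $2n$.
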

\begin{proof}[Proof]

We define the function $U(z)\colon \mathbb{R}_+ \to \mathbb{R}_+$:  
$$
	U(z) = s\ln{\frac{W_A(z)}{W_B(z)}} - \ln{z}.
$$
Zeros of $U$ are stationary points of the considered dynamics, and  $\sgn(U(z)|_{z_0})=\sgn(\dot{z}|_{z_0})$.

It is sufficient to show that $U(z)=0$ at at most $2n+1$ points in $(0,+\infty)$. To this end we calculate
	$$
		U'(z) = s\frac{W_A'}{W_A} - s\frac{W_B'}{W_B} - \frac{1}{z} = \frac{sW_A'W_Bz - sW_B'W_Az - W_AW_B}{W_AW_Bz}.
	$$
The polynomial $sW_A'W_Bz - sW_B'W_Az - W_AW_B$ is at most of the order $2n$, therefore $U'(z)$ has at most $2n$ zeros. Since between each two zeros of $U$ there has to be a zero of $U'(z)$, the function $U(z)$ can have at most $2n+1$ zeros.
\end{proof}

\begin{tw}$ \ $\label{wnstab0}

For the (n+1)--person games with two strategies and with all payoffs positive there exists at least one mixed equilibrium. 
\end{tw}
\begin{proof}
Due to continuity of the considered dynamics is enough to prove that both boundary equilibrium points corresponding to pore equilibria: $x=0, \  x=1$ are unstable. This will be proved in two lemmas below. 
\end{proof}
\begin{lem}\label{stab0}
	Let $p = \min{\{ i\colon a_i>0\}},\ r = \min{\{ i\colon b_i>0\}}$. If
	\begin{eqnarray}
	\label{war1a} ps>rs+1, & \textrm{ or}& \\
	\label{war1b} ps=rs+1 & \textrm{ and } & {n \choose p}a_p<{n \choose r}b_r,
	\end{eqnarray}
	then the point $x=0$ is asymptotically stable. If 
	\begin{eqnarray*}
		ps<rs+1, &\textrm{ or }&\\
		ps=rs+1 &\textrm{ and }& {n \choose p}a_p>{n \choose r}b_r,\\
	\end{eqnarray*}
	then the point $x=0$ is unstable.
\end{lem}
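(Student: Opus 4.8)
The plan is to pass to the rescaled variable $z=x/(1-x)$ and work with the factored dynamics (\ref{dynnos}). Since $z=x/(1-x)$ is an increasing homeomorphism of $[0,1)$ onto $[0,+\infty)$, it is orientation preserving and maps the equilibrium $x=0$ to $z=0$, so the asymptotic stability of $x=0$ for (\ref{dynnosx}) is equivalent to that of $z=0$ for (\ref{dynnos}). Because the prefactor $f(z)$ is strictly positive, the sign of $\dot z$ coincides with the sign of $g(z):=(W_A(z)/W_B(z))^{1-\beta}-z^{\alpha}$, and in one dimension $z=0$ is asymptotically stable if $g(z)<0$ for all sufficiently small $z>0$, and unstable if $g(z)>0$ there.

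First I would record the leading-order behaviour of the payoff polynomials near the boundary. From $p=\min\{i:a_i>0\}$ and $r=\min\{i:b_i>0\}$ one gets $W_A(z)=\binom{n}{p}a_p\,z^{p}(1+O(z))$ and $W_B(z)=\binom{n}{r}b_r\,z^{r}(1+O(z))$ as $z\to 0^{+}$, hence $(W_A/W_B)^{1-\beta}=\big(\binom{n}{p}a_p/\binom{n}{r}b_r\big)^{1-\beta}z^{(p-r)(1-\beta)}(1+O(z))$. Thus $g$ is, to leading order, the difference of the two monomials $z^{(p-r)(1-\beta)}$ and $z^{\alpha}$, and the whole question reduces to comparing the exponents $(p-r)(1-\beta)$ and $\alpha$, together with a comparison of coefficients in the tie case. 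Dividing the exponent inequality by $\alpha>0$ and inserting $s=(1-\beta)/\alpha$ converts the comparison of $(p-r)(1-\beta)$ with $\alpha$ into the comparison of $ps$ with $rs+1$, which is exactly the dichotomy in the statement.

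The three cases then follow by reading off which term dominates $g(z)$ as $z\to 0^{+}$. If $ps>rs+1$ the monomial $z^{\alpha}$ has the smaller exponent and dominates, so $g(z)<0$ and $z=0$ is asymptotically stable; if $ps<rs+1$ the term $(W_A/W_B)^{1-\beta}$ dominates, $g(z)>0$, and $z=0$ is unstable. In the borderline case $ps=rs+1$ both terms carry the same power $z^{\alpha}$, so $g(z)=\big[(\binom{n}{p}a_p/\binom{n}{r}b_r)^{1-\beta}-1\big]z^{\alpha}(1+o(1))$; since $t\mapsto t^{1-\beta}$ is increasing for $\beta<1$, the bracket is negative precisely when $\binom{n}{p}a_p<\binom{n}{r}b_r$ and positive when the reverse inequality holds, giving the stated stability and instability conditions. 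The degenerate value $\beta=1$ forces $s=0$, makes the equality and the ``$>$'' branches vacuous, and leaves $g(z)\to 1>0$, consistently with the surviving ``$<$'' branch.

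The only delicate point, which is otherwise routine, is upgrading the leading-order expansion to a sign statement on a genuine one-sided neighbourhood of $z=0$: one factors out the dominant power of $z$ and argues that the bracketed coefficient, being continuous and nonzero at $z=0$ in every non-degenerate case, retains its sign on some interval $(0,\delta)$. This bookkeeping is where the care is needed; the rest is the exponent/coefficient comparison above. The companion statement for $x=1$, needed to finish the proof of Theorem~\ref{wnstab0}, will be obtained identically after the reflection $x\mapsto 1-x$, which interchanges the roles of the $a_k$ and $b_k$ and of the two endpoints.
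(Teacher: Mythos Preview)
Your argument is correct and follows the same route as the paper's proof: both reduce the question to the sign of $\dot z$ on a right neighbourhood of $z=0$ via the leading-order expansions $W_A(z)\sim\binom{n}{p}a_p z^{p}$, $W_B(z)\sim\binom{n}{r}b_r z^{r}$, and then compare the exponent $(p-r)(1-\beta)$ with $\alpha$, i.e.\ $ps$ with $rs+1$. The only cosmetic difference is that the paper packages the sign analysis through the logarithmic function $U(z)=s\ln(W_A/W_B)-\ln z$ and computes $\lim_{z\to 0}U(z)$, whereas you compare the two monomials in $g(z)$ directly; since $\operatorname{sgn}g(z)=\operatorname{sgn}U(z)$, the two presentations are equivalent.
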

\begin{proof}[Proof]
We note that 
	\begin{eqnarray*}
		W_A(z) &=& {n\choose p}a_pz^p+O(z^p),\\
		W_B(z) &=& {n\choose r}b_rz^r+O(z^r).
	\end{eqnarray*}
We obtain
	\begin{eqnarray*}
		\textrm{for the condition (\ref{war1a}):} \ \ \lim_{z\to 0}U(z)&=&\lim_{z\to 0}\ln\left(\frac{(W_A(z))^s}{(W_B(z))^sz}\right) = -\infty,\\
		\textrm{for the condition (\ref{war1b}):} \ \ \lim_{z\to 0}U(z)&=&\ln\left(\frac{{n \choose p}a_p}{{n \choose r}b_r}\right) < 0.
	\end{eqnarray*}
	
In our one dimensional system the stationary point $\bar{z}>0$ is asymptotically stable iff $\exists_{\varepsilon > 0}\  \dot{z}|_{(\bar{z},\bar{z}+\varepsilon)}<0 \land \dot{z}|_{(\bar{z}-\varepsilon,\bar{z})}>0$. The sufficient condition is $U'(\bar{z})<0$. On the contrary, if $U'(\bar{z})>0$, then the point $\bar{z}$ is unstable. 
Analogously, the sufficient condition for stability of $z=0$ reads: $\lim_{z\to 0}U(z)<0$.
	Since $\sgn(U(z_0))=\sgn(\dot{z}|_{z_0})$, then in the cases (\ref{war1a}) and (\ref{war1b}) we have $\dot{z}<0$ 
for a certain $\varepsilon_z>0$  and $z\in (0,\varepsilon_z)$. From the equality  $\dot{x}=\frac{\dot{z}}{(1+z)^2}$ it follows, that then for a certain $\varepsilon_x>0$ and ~$x\in (0,\varepsilon_x)$ we have $\dot{x}<0$. 

The proof of the second part of Lemma \ref{stab0} is analogous.
\end{proof}

\begin{lem}
	Let $p' = \max{\{ i\colon a_i>0\}},\ r' = \max{\{ i\colon b_i>0\}}$. If 
	\begin{eqnarray*}
		p's>r's+1, &\textrm{ or}&\\
		p's=r's+1 &\textrm{ and }& {n \choose p'}a_{p'}>{n \choose r'}b_{r'},
	\end{eqnarray*}
	then the point $x=1$ is asymptotically stable. If 
	\begin{eqnarray*}
		p's<r's+1, &\textrm{ or}&\\
		p's=r's+1 &\textrm{ and }& {n \choose p'}a_{p'}<{n \choose r'}b_{r'},
	\end{eqnarray*}
	then the point $x=1$ is unstable.
\end{lem}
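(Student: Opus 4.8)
The plan is to mirror the proof of Lemma~\ref{stab0}, replacing the behaviour of $U(z)$ near $z=0$ by its behaviour as $z\to+\infty$, since under the substitution $z=x/(1-x)$ the pure state $x=1$ corresponds to the limit $z\to+\infty$. I would first record the leading asymptotics of the two polynomials at infinity: because $p'$ and $r'$ are the highest indices with nonzero coefficients,
$$
W_A(z)={n\choose p'}a_{p'}z^{p'}+O(z^{p'-1}),\qquad
W_B(z)={n\choose r'}b_{r'}z^{r'}+O(z^{r'-1}),
$$
so that, inserting these into $U(z)=s\ln\frac{W_A(z)}{W_B(z)}-\ln z$, one gets
$$
U(z)=s\ln\frac{{n\choose p'}a_{p'}}{{n\choose r'}b_{r'}}+\bigl(s(p'-r')-1\bigr)\ln z+o(1),\qquad z\to+\infty.
$$

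The second step is to read off $\lim_{z\to\infty}U(z)$ from the sign of the coefficient $sp'-sr'-1$ of $\ln z$. If $p's>r's+1$ this coefficient is positive and $U(z)\to+\infty$; if $p's<r's+1$ it is negative and $U(z)\to-\infty$; in the borderline case $p's=r's+1$ the logarithmic terms cancel and $U(z)\to s\ln\frac{{n\choose p'}a_{p'}}{{n\choose r'}b_{r'}}$, whose sign (recall $s>0$ here, as $s(p'-r')=1$) is that of ${n\choose p'}a_{p'}-{n\choose r'}b_{r'}$.

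The third step converts these limits into statements about $x=1$ through the same mechanism used before. Here the orientation is reversed relative to the endpoint $z=0$: asymptotic stability of $x=1$ (approached from the interior, i.e.\ from $x<1$, $z$ large) requires $\dot x>0$ near $x=1$, hence $\dot z>0$, hence $U(z)>0$ for all sufficiently large $z$, while $U(z)<0$ for large $z$ forces $\dot z<0$ and yields instability. Since $\dot x=\dot z/(1+z)^2$ gives $\sgn\dot x=\sgn\dot z$, and $\sgn U(z)=\sgn\dot z$ as established for Lemma~\ref{stab0}, each of the four hypotheses produces a definite sign of $U(z)$ for large $z$ and hence the asserted (in)stability of $x=1$.

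I expect no serious obstacle, since the computation is a faithful mirror of the previous lemma. The only points demanding care are the reversal of the sign convention at the right endpoint compared with Lemma~\ref{stab0} (there $U<0$ near $z=0$ signals stability of $x=0$, whereas here $U>0$ near $z=\infty$ signals stability of $x=1$), and the retention of the constant term in the degenerate case $p's=r's+1$, where the dominant $\ln z$ contributions cancel and the comparison of the leading coefficients ${n\choose p'}a_{p'}$ and ${n\choose r'}b_{r'}$ becomes decisive.
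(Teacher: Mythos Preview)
Your argument is correct: the asymptotic expansion of $U(z)$ as $z\to+\infty$ is right, the case analysis on the sign of $s(p'-r')-1$ is right, and you handle the orientation reversal at the right endpoint carefully (stability of $x=1$ corresponds to $U>0$ for large $z$, the opposite sign convention from $x=0$).

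The paper takes a shorter route. Rather than redoing the asymptotic analysis at $z\to+\infty$, it observes that interchanging the two strategies sends $x\mapsto 1-x$, so that the point $x=1$ in the original game becomes $\tilde x=0$ in the swapped game, and then applies Lemma~\ref{stab0} verbatim. Concretely, under the swap one has $\tilde a_k=b_{n-k}$, $\tilde b_k=a_{n-k}$, hence $\tilde p=\min\{i:\tilde a_i>0\}=n-r'$ and $\tilde r=n-p'$; the condition $\tilde p\,s>\tilde r\,s+1$ of Lemma~\ref{stab0} unwinds to $p's>r's+1$, and in the borderline case $\binom{n}{\tilde p}\tilde a_{\tilde p}<\binom{n}{\tilde r}\tilde b_{\tilde r}$ becomes $\binom{n}{r'}b_{r'}<\binom{n}{p'}a_{p'}$ (using $\binom{n}{n-k}=\binom{n}{k}$), exactly the stated hypotheses. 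The paper's reduction is more economical and avoids repeating any computation; your direct approach has the merit of making the mechanism at $z=\infty$ explicit, and in particular makes transparent why the inequality on the binomial-weighted coefficients flips direction compared with Lemma~\ref{stab0}.
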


\begin{proof}[Proof]
It is enough to apply Lemma \ref{stab0} with the strategies interchanged.
\end{proof}
The above theorem guarantees the existence, however not uniqueness of the ME 
(we remind that for the 2-person symmetric games with positive payoffs the uniqueness has been proved for $s \le 1$. Below 
we give an example of a 3-person game with positive payoffs and three ME for the 
sensitivity $s=1$.  

\vskip 0.1cm
{\bf{Example}}

Let $a_0=5,\ a_1=a_2=1,\ b_0=1, \ b_1=\frac{2}{3}, \ b_2=5\frac{1}{3}$. 
The corresponding payoff matrix reads: 
\begin{equation}\label{3personMatrix}
	\begin{array}{r|ccc}
		& 11 & 12 & 22 \\ \hline
		1 & 5 & 1 & 1 \\
		2 & 1 & \frac{2}{3} & 5\frac{1}{3} 		
	\end{array}
\end{equation}
Note that this game can be treated as a coordination game (the players are better off if they play the same strategies).  
The equation for ME reads: 
\begin{equation}
z=\left[\frac{a_0 z^2+a_1 z + a_2}{b_0 z^2+b_1z+b_2}\right]^s, 
\end{equation}
which, for $s=1$ has three positive roots, resulting, after substitution 
$z=\frac{x}{1-x}$ in three ME: 
$x_1=\frac{1}{4}, \ x_2=\frac{1}{2}, \ , x_1=\frac{3}{4}$. 
We checked that $x_1$ and $x_3$ are locally stable, and $x_2$ unstable.

\subsection{Public Goods game}

We apply the general results of the previous subsection to the important 
$(n+1)$--person game--the Public Goods (PG) 
game. In this game each of $(n+1)$ players receives an amount $g$, and chooses one of two actions: C: Contribute with $g$ into the common pool or D: Do not contribute. Let $k$ among $n+1$ players choose C. The amount $kg$ in the common pool is multiplied by $r, \ \ n+1 > r > 1$, and distributed equally among $n+1$ players. 
Using the notation for the payoffs in the $(n+1)$-person game introduced in 
(\ref{wyplatynnn}) and 
the normalization $g=1$ we obtain the following payoffs of the strategies respectively C and D in the PG game: 
\begin{equation} \label{public}
a_k=(k+1)p,\ \ b_k = kp+1, \quad \ \ p:=\frac{r}{n+1}, \quad \frac{1}{n+1}<p<1.
\end{equation}

Below we investigate equilibria for the PG game. We show that their number and stability properties are the same as of a 2-person Prisoner's Dilemma (PD) game, of which it is the multiple, see definition below. Thus, since any 2-person PD with positive payoffs has at least one and at most three internal equilibria, the same is true for the PG game. We begin with the definition of the multiple of any 2-person symmetric game. 
\begin{df}\label{zloz} Multiple of the 2-person symmetric game with the payoff matrix 
	$$
		\left[\begin{array}{cc} a&b\\c&d\end{array}\right]
	$$
	is the $(n\!+\!1)$--person symmetric game with the payoffs 
	$$
		\left.\begin{array}{rcl}a_k&=&ka+(n-k)b,\\
		b_k&=&kc+(n-k)d.\end{array}\right.  
	$$
\end{df}
$k=0,1,...,n.$
Thus, in the multiple of the 2-person symmetric game the payoff of a player is the sum of his payoffs from all the 2-person games with the other $n$ players.  We prove 

\begin{lem}\label{wielo}
	Any 2-person symmetric game and its multiple have the same equilibria in the dynamics 
(\ref{dynnosx}). Moreover, the stability properties of the equilibria of both games are the same. 
\end{lem}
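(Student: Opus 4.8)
The plan is to reduce both assertions---equality of equilibria and equality of their stability type---to a single closed-form identity between the generating polynomials of the two games. Recall from Section V that the interior equilibria of (\ref{dynnosx}) are exactly the positive roots of $\bar z=(W_A(\bar z)/W_B(\bar z))^s$, and that the stability of a root is governed by the sign of $U'(\bar z)$, where $U(z)=s\ln(W_A(z)/W_B(z))-\ln z$, together with the relation $\sgn(U(z))=\sgn(\dot z)$. Hence it suffices to prove that the function $U$ attached to the multiple coincides, as a function on $(0,\infty)$, with the function $U$ attached to the underlying 2-person game; this forces both the zero sets (the equilibria) and the signs of $U'$ at those zeros (the stabilities) to agree.

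First I would substitute the payoffs $a_k=ka+(n-k)b$, $b_k=kc+(n-k)d$ of Definition \ref{zloz} into $W_A(z)=\sum_{k=0}^n\binom{n}{k}a_kz^k$ and the analogous sum for $W_B$. Splitting each sum into its two terms and invoking the standard identities $\sum_k\binom{n}{k}z^k=(1+z)^n$, $\sum_k\binom{n}{k}kz^k=nz(1+z)^{n-1}$, and $\sum_k\binom{n}{k}(n-k)z^k=n(1+z)^{n-1}$, I obtain $W_A(z)=n(1+z)^{n-1}(az+b)$ and $W_B(z)=n(1+z)^{n-1}(cz+d)$. The decisive observation is that the factor $n(1+z)^{n-1}$ is common to both polynomials and independent of which strategy is considered, so it cancels in the quotient and leaves $W_A(z)/W_B(z)=(az+b)/(cz+d)$.

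It remains to identify this quotient with the data of the 2-person game, which is precisely the $n=1$ instance of (\ref{dynnosx}), i.e. the dynamics (\ref{a300}). Writing $x=z/(1+z)$ in its mean payoffs $\nu_1=(a-b)x+b$ and $\nu_2=(c-d)x+d$ gives $\nu_1=(az+b)/(1+z)$ and $\nu_2=(cz+d)/(1+z)$, hence $\nu_1(z)/\nu_2(z)=(az+b)/(cz+d)=W_A(z)/W_B(z)$ for every $z>0$. Therefore the equilibrium equations $z=(W_A/W_B)^s$ and $z=(\nu_1/\nu_2)^s$ are literally the same equation, and the two functions $U(z)=s\ln(W_A/W_B)-\ln z$ and $s\ln(\nu_1/\nu_2)-\ln z$ coincide identically. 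The equilibria thus coincide, and since the two $U$ agree their derivatives agree at each common root, so the stability classification via the sign of $U'$ is identical for the two games.

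The argument is essentially a closed-form evaluation, so the only genuine obstacle is bookkeeping: one must apply the three binomial identities with the correctly shifted exponent $(1+z)^{n-1}$ and verify that the extracted prefactor really is strategy-independent, which is exactly what makes it cancel. A secondary point worth recording is that the positive prefactor $f(z)$ in (\ref{dynnos}) and the relation $\sgn(U(z))=\sgn(\dot z)$ used for stability are insensitive to this overall rescaling, so that the one-dimensional stability criterion transfers unchanged; this is immediate because $f(z)>0$.
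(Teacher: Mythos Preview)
Your proof is correct and follows essentially the same route as the paper: compute $W_A(z)=n(1+z)^{n-1}(az+b)$ and $W_B(z)=n(1+z)^{n-1}(cz+d)$ via binomial identities, cancel the common factor in the ratio, and conclude that the function $U$ for the multiple coincides with that of the underlying 2-person game, forcing both the equilibria and their stability to agree. The only cosmetic difference is that the paper splits $ka+(n-k)b=k(a-b)+nb$ before applying the identities, whereas you handle the $k$ and $n-k$ terms separately; you also spell out the identification with $\nu_1/\nu_2$ a bit more explicitly than the paper does.
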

\begin{proof}[Proof]
For the multiple of the 2-person game we have
\begin{eqnarray*}
	W_A(z) &=& \sum_{k=0}^n{n \choose k}(ka+(n-k)b)z^k = \sum_{k=0}^n{n \choose k}k(a-b)z^k + n\sum_{k=0}^n{n \choose k}bz^k \\
    &=& n(a-b)z\sum_{k=1}^n{n-1 \choose k-1}z^{k-1} + nb(1+z)^n\\
	&=& n(a-b)z(1+z)^{n-1} + nb(1+z)^n = n(1+z)^{n-1}(z(a-b) + (1+z)b) \\
    &=& n(1+z)^{n-1}(za+b),\\
\end{eqnarray*}   
and analogously $W_B(z) = n(1+z)^{n-1}(zc+d)$. As previously we define the function $U$, zeros of which are stationary points of the considered dynamics:  
\begin{eqnarray*}
U(z) = s\ln{\frac{W_A(z)}{W_B(z)}} - \ln{z} =  
 s\ln\left(\frac{za+b}{zc+d}\right) - \ln z.
\end{eqnarray*}
Thus, the function $U(z)$ for the $(n+1)$--person game which is the multiple of the 2-person game  is the same as for the 2-person game. In consequence both games have the same equilibria,  and the stability properties of the corresponding equilibria are identical. 
\end{proof}
\noindent Now we come back to the PG $(n\!+\!1)$--person game with the payoffs $a_k=(k+1)p,\ b_k = kp+1$, with $\frac{1}{n+1}<p<1$. We note that they can be rewritten as 
\begin{eqnarray*}
	a_k &=& kp\frac{n+1}{n} + (n-k)p\frac{1}{n}, \\
	b_k &=& k\frac{pn+1}{n} + (n-k)\frac{1}{n}.
\end{eqnarray*}
It means that \emph{Public Goods game} is the multiple of the PD game with the payoff matrix 
\begin{equation}\label{PD}
	\left[\begin{array}{cc} \frac{p(n+1)}{n} & \frac{p}{n}\\
	\frac{pn+1}{n} & \frac{1}{n}\end{array}\right],
\end{equation}
therefore, from Lemma \ref{wielo} it is sufficient to investigate the dynamics for the PD game (\ref{PD}).

\section{Conclusions}

In~this work we discussed the evolutionary dynamics of populations of agents with complex personality profiles, 
which is governed by the attractiveness of strategies rather than by their payoffs.   
The agents can play various types of non-cooperative games, including general two--person asymmetric games and 
multi-person games. The profiles determine the weights the agents associate 
to the payoffs and the popularities of the available strategies. 
It turns out that the polymorphic equilibria and their stability properties are in general determined by a single 
sensitivity to reinforcement parameter which characterizes the population. The parameter has an interpretation in the 
Matching Law of mathematical psychology. 
For general characters there exist stable mixed equilibria of the considered dynamics, not present in the classical 
evolutionary game theory approach based on the replicator dynamics. 

There are various interesting open problems related to the presented research: 
generalization for systems with more behavioral types, introduction of 
other, more general types of attractiveness functions, which for example take into account the speed of the changes 
of the actual composition of population. 
It will also be interesting to allow the actors to change their personalities during the interactions and/or to react with 
delay to received impulses (information lag). In particular, the delay can be 
present only in the social or only in the material part of the attractiveness fuction.  
Models with a finite number of heterogeneous agents with different 
personality characteristics seem to be another interesting area of future research.  

\begin{acknowledgements}
The first author (TP) was supported by~the Polish Government Grant no. N N201 362536.
\end{acknowledgements}

\section{appendix}

\subsection{Ideal personality profiles}

{\bf HE} (Homo Economicus): $\alpha=1$, $ \ \beta=0$. It means that HE assesses attractiveness of the action exclusively through its effectiveness $(u_i=\nu_i)$. HE is interested exclusively in the future prospects,  

{\bf HS} (Homo Sociologicus): $\alpha=0$, $\ \beta=1$. The HS assesses attractiveness of the action only through 
its popularity, the past effect for him/her $(u_i=p_i).$ HS is insensitive to the payoffs of the game.

{\bf HT} (Homo Transcendentalis): $\alpha = \beta =1.$ It is an ideal type, for which every action has the same 
attractiveness $(u_i=1).$ It describes a personality not interested in the effectiveness of a behavior or in its 
propensity, but rather by some other values. Thus, HT is the ideal type insensitive to the payoffs and 
the popularities of the strategies. 

{\bf HA} (Homo Afectualis): $\alpha = \beta =0.$ The attractiveness function takes the form $u_i=\nu_i p_i$ 
The corresponding evolution equations reduce to the standard replicator equations.  
Inserting $\alpha=0$ into (\ref{a2}) we obtain a generalized form of replicator equations, in which the 
material payoffs are included in nonlinear way:
\begin{equation}
\dot p_i = p_i\sum_{j=1,...K} p_j [\nu_i^{1-\beta} - \nu_j^{1-\beta}], \ \ \ i=1,...K.  \label{a11111}
\end{equation}
In particular for $\beta=0$ we obtain the usual replicator equations for the two-person symmetric games with 
$K$ strategies.

\subsection{Symmetric anti-coordination games with 3 strategies}

Here we consider the ~3--strategy game with the payoff matrix 
\begin{equation}\label{3matrix_a}
	\begin{array}{r|ccc}
		& 1 & 2 & 3 \\ \hline
		1 & 0 & a_1 & a_2 \\
		2 & a_1 & 0 & a_3 \\
		3 & a_2 & a_3 & 0
	\end{array}
\end{equation}
$a_i > 0, \ \ i=1,2,3,$ in which the players are better off if they use different 
strategies, then otherwise.  We prove 
\begin{cor}
There exists the unique ME $(\frac{1}{3},\frac{1}{3},\frac{1}{3})$ of the game (\ref{3matrix_a}) with 
$a_i=a>0, \ i=1, 2, 3$ for all $s>0$. 
\end{cor}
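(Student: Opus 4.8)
The plan is to exploit the full permutation symmetry of the payoff matrix (\ref{3matrix_a}) with $a_i=a$. First I would compute the mean payoffs: since every off-diagonal entry equals $a$ and the diagonal vanishes, for the three distinct indices $i,j,k$ we get $\nu_i=a(p_j+p_k)=a(1-p_i)$, so each $\nu_i$ is a strictly decreasing affine function of $p_i$ alone. Substituting this into the equilibrium characterization obtained from (\ref{a6001}), namely $p_i/p_1=(\nu_i/\nu_1)^s$, and clearing denominators shows that any mixed equilibrium satisfies $p_i/\nu_i^s=p_1/\nu_1^s$ for $i=1,2,3$. After cancelling the common factor $a^s$ this says precisely that the single scalar quantity $h(p):=p/(1-p)^s$ takes the same value at $p_1,p_2,p_3$.

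The statement then reduces to showing that $h\colon(0,1)\to(0,\infty)$ is injective, for then $h(p_1)=h(p_2)=h(p_3)$ forces $p_1=p_2=p_3$, and the normalization $p_1+p_2+p_3=1$ together with (\ref{revers}) yields the claimed value $\tfrac13$. Injectivity I would obtain from strict monotonicity. A direct differentiation gives
\[
	h'(p)=(1-p)^{-s-1}\bigl[1-(1-s)p\bigr],
\]
and since the prefactor $(1-p)^{-s-1}$ is positive on $(0,1)$, everything hinges on the sign of the bracket $1-(1-s)p$.

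Controlling this sign is the one place where care is needed, and it is the step I expect to be the crux. For $s\ge 1$ we have $1-s\le 0$, so $1-(1-s)p\ge 1>0$ at once. For $0<s<1$ the coefficient $1-s$ is positive, so one might fear the bracket vanishes; but its unique zero lies at $p=1/(1-s)>1$, outside the admissible interval, and on $(0,1)$ the decreasing map $p\mapsto 1-(1-s)p$ stays above its endpoint value $1-(1-s)=s>0$. Hence $h'>0$ throughout $(0,1)$ for every $s>0$, so $h$ is strictly increasing and therefore injective.

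Finally I would record existence by direct substitution: at $p_1=p_2=p_3=\tfrac13$ all three $\nu_i$ coincide, so each ratio in (\ref{a6001}) equals one and the equilibrium equations hold. Combined with the injectivity just established, this shows $(\tfrac13,\tfrac13,\tfrac13)$ is the one and only ME for every $s>0$.
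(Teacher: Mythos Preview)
Your proof is correct and takes a genuinely different route from the paper's. The paper works in the ratio coordinates $x=p_2/p_1$, $y=p_3/p_1$ of (\ref{alg1})--(\ref{alg2}), subtracts the two equilibrium equations, and observes that the resulting identity
\[
x-y=\frac{(1+y)^s-(1+x)^s}{(x+y)^s}
\]
cannot hold for $x\neq y$ because the two sides have opposite signs; this pins down $p_2=p_3$, though as written it does not separately argue uniqueness along the diagonal $x=y$. Your argument instead exploits the special feature of this payoff matrix that $\nu_i=a(1-p_i)$ depends on $p_i$ alone, which collapses the equilibrium system to the single scalar condition $h(p_1)=h(p_2)=h(p_3)$ with $h(p)=p/(1-p)^s$, and then obtains uniqueness from the strict monotonicity of $h$ on $(0,1)$. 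This approach is fully symmetric in the three coordinates, handles the diagonal case automatically, and the monotonicity lemma for $h$ is a clean reusable fact; the paper's sign trick is shorter but less self-contained.
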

\begin{proof}
We check that $x=y=1$, which corresponds to the above ME,  satisfy (\ref{alg1}), (\ref{alg2}). For $x \neq y$, 
subtracting (\ref{alg2}) from (\ref{alg1}) we obtain the equation 
$$x-y=\frac{(1+y)^s-(1+x)^s}{(x+y)^s}, $$
which can not be satisfied due to different signs of both sides. 
\end{proof}
For the general anti--coordination game with arbitrary $a_{ij},  \ s$ such that $a_{ij} > a_{kk} >0$ for all $k$ and all 
$i\neq j, \ \ i, j, k = 1, 2, 3$, 
we checked numerically the existence of an unique ME for all considered numerical values of these parameters.

\subsection{Asymmetric pure coordination games}

\begin{proof}
The dynamics (\ref{dyn2z}) for the pure coordination game 
\begin{equation}
	\left[\begin{array}{cc} (a_1,a_2) & (0,0) \\ (0,0) & (d_1,d_2)\end{array}\right], \label{anticoordmatrix11}
\end{equation} 
with $a_i > 0, \ d_i > 0, \ i=1, 2,$ reads 
\begin{equation}\label{dynkoord}
	\begin{array}{rcl}
	\dot{z_1} &=& z_1^{1-\alpha_1}(1+z_1)^{\alpha_1}\left(\frac{a_1z_2}{1+z_2}\right)^{1-\beta_1} - z_1(1+z_1)^{\alpha_1}\left(\frac{d_1}{1+z_2}\right)^{1-\beta_1}, \\[2mm]
	\dot{z_2} &=& z_2^{1-\alpha_2}(1+z_2)^{\alpha_2}\left(\frac{a_2z_1}{1+z_1}\right)^{1-\beta_2} - z_2(1+z_2)^{\alpha_2}\left(\frac{d_2}{1+z_1}\right)^{1-\beta_2}. 
	\end{array}
\end{equation}
The equations (\ref{zera2z}) for ME of (\ref{dynkoord}) read: 
$\bar{z}_1 = \left(\frac{a_1}{d_1}\bar{z}_2\right)^{s_1}, \ \  
\bar{z}_2 = \left(\frac{a_2}{d_2}\bar{z}_1\right)^{s_2}. $ 
Linearization of (\ref{dynkoord}) around the equilibrium (\ref{stackoord}) leads to the matrix 
$$
	\!\left[\!\begin{array}{l@{\!\!\!\!\!\!\!\!}r}
	-\alpha_1(1+\bar{z}_1)^{\alpha_1}\!\left(\frac{d_1}{1+\bar{z}_2}\right)^{1-\beta_1} &
	(1-\beta_1)\bar{z}_1^{1-\alpha_1}(1+\bar{z}_1)^{\alpha_1}\bar{z}_2^{-\beta_1}\!\left(\frac{a_1}{1+\bar{z}_2}\right)^{1-\beta_1}\\
	(1-\beta_2)\bar{z}_2^{1-\alpha_2}(1+\bar{z}_2)^{\alpha_2}\bar{z}_1^{-\beta_2}\!\left(\frac{a_2}{1+\bar{z}_1}\right)^{1-\beta_2} &
	-\alpha_2(1+\bar{z}_2)^{\alpha_2}\!\left(\frac{d_2}{1+\bar{z}_1}\right)^{1-\beta_2}
	\end{array}\!\right]
$$
with a negative trace.  Using (~\ref{stackoord}) we show that if $s_1s_2<1$ then the determinant of the matrix is positive, 
therefore both eigenvalues have negative real parts, i.e. the equilibrium $(\bar{z}_1,\bar{z}_2)$ is locally 
asymptotically stable. Analogously, if $s_1s_2>1,$ then the determinant is negative, therefore at least one eigenvalue 
has positive real part, i.e.  the equilibrium $(\bar{z}_1,\bar{z}_2)$ is unstable. 
The proof of other statements is omitted. 
\end{proof}

\subsection{Asymmetric anti--coordination games}\label{2chakoord}

We apply the general model for the asymmetric anti--coordination game 
with the payoff matrix 
\begin{equation}
	\left[\begin{array}{cc} (0,0) & (b_1,b_2) \\ (c_1,c_2) & (0,0)\end{array}\right], \label{anticoordmatrix}
\end{equation} 
and $b_i > 0, \ c_i > 0, \ i=1, 2.$ We prove the following theorem
\begin{tw} $ \ $ \label{twistnienie11}

If $s_1s_2\neq 1$ then there exists a unique polymorphic equilibrium of the dynamics (\ref{dyn2z}) for the populations 
with the payoff matrices (\ref{anticoordmatrix}): 
\begin{equation}\label{antistackoord}
	\bar{z}_1 = \left(\frac{b_1c_2^{s_2}}{c_1b_2^{s_2}}\right)^{\frac{s_1}{1-s_1s_2}}, \quad 
    \bar{z}_2 = \left(\frac{c_1^{s_1}b_2}{b_1^{s_1}c_2}\right)^{\frac{s_2}{1-s_1s_2}}.
\end{equation}	
If $s_1s_2 = 1$ then we have the same situation as in Theorem \ref{twistnienie} with obvious changes of the payoff 
parameters. 

The equilibrium (\ref{antistackoord}) is locally asymptotically stable if $s_1s_2<1$, and unstable if $s_1s_2>1$.
\end{tw}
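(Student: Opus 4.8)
The plan is to follow the proof of Theorem~\ref{twistnienie} for the pure coordination case, because inserting the anti--coordination matrix (\ref{anticoordmatrix}) into the general fixed--point equations (\ref{zera2z}) produces a system of exactly the same log--linear type. Here $a_1=a_2=d_1=d_2=0$, so $\nu_{A1}=b_1/(1+z_2)$, $\nu_{B1}=c_1z_2/(1+z_2)$, $\nu_{A2}=b_2/(1+z_1)$, $\nu_{B2}=c_2z_1/(1+z_1)$, and (\ref{zera2z}) collapses to
\[
\bar z_1=\left(\frac{b_1}{c_1\bar z_2}\right)^{s_1},\qquad
\bar z_2=\left(\frac{b_2}{c_2\bar z_1}\right)^{s_2}.
\]

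For existence and uniqueness I would take logarithms. Since all $b_i,c_i$ and any interior $\bar z_i$ are positive, setting $L_i=\ln\bar z_i$ turns the pair above into the linear system $L_1+s_1L_2=s_1\ln(b_1/c_1)$ and $s_2L_1+L_2=s_2\ln(b_2/c_2)$, whose coefficient determinant equals $1-s_1s_2$; this correspondence is a bijection between positive ME and solutions $(L_1,L_2)\in\mathbb{R}^2$. When $s_1s_2\neq1$ the system has a unique solution, and exponentiating and grouping the exponents reproduces precisely the formulas (\ref{antistackoord}), giving the unique interior ME. When $s_1s_2=1$ the determinant vanishes and the two rows are proportional with factor $s_1$, so the system is consistent only if a single compatibility relation between the $b_i,c_i,s_i$ holds; in that case the solution set is a one--parameter family, and otherwise no ME exists --- the same alternative as in Theorem~\ref{twistnienie}, with the payoff labels relabelled.

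For stability I would linearize the flow (\ref{dyn2z}) written out for (\ref{anticoordmatrix}) about the equilibrium. Differentiating each right--hand side and evaluating at $(\bar z_1,\bar z_2)$, the contribution from differentiating the bracket that vanishes at the fixed point drops out, and I would simplify using the equilibrium identity $\bar z_i^{-\alpha_i}\nu_{Ai}^{1-\beta_i}=\nu_{Bi}^{1-\beta_i}$. This yields diagonal entries $J_{ii}=-\alpha_i(1+\bar z_i)^{\alpha_i}\nu_{Bi}^{1-\beta_i}<0$, hence a negative trace, and off--diagonal entries $J_{12}=-\frac{(1-\beta_1)\bar z_1}{\bar z_2}(1+\bar z_1)^{\alpha_1}\nu_{B1}^{1-\beta_1}$ together with the symmetric $J_{21}$. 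In contrast to the coordination case these off--diagonal terms are negative (here $\nu_{A1}$ decreases in $z_2$), but the two sign reversals cancel in the product $J_{12}J_{21}$, so the determinant comes out identical to that case: writing $1-\beta_i=s_i\alpha_i$ from (\ref{2sens}),
\[
\det J=\alpha_1\alpha_2(1-s_1s_2)\,(1+\bar z_1)^{\alpha_1}(1+\bar z_2)^{\alpha_2}\,\nu_{B1}^{1-\beta_1}\nu_{B2}^{1-\beta_2}.
\]
Every factor except $1-s_1s_2$ is positive, so $\det J>0$ for $s_1s_2<1$; combined with the negative trace this forces both eigenvalues to have negative real part, giving local asymptotic stability. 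For $s_1s_2>1$ we get $\det J<0$, hence one eigenvalue of each sign, i.e. a saddle, and instability.

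I expect the main obstacle to be the Jacobian computation, and specifically the bookkeeping that shows the two sign flips in $J_{12}$ and $J_{21}$ cancel, so that $J_{12}J_{21}$ keeps the same positive sign as in the coordination game and the determinant collapses to the factored form above. Once that is checked the sign of $\det J$ is read off from $1-s_1s_2$ and the conclusion is immediate; the existence half is routine after the logarithmic substitution, the only care being the degenerate line $s_1s_2=1$, handled by the consistency argument indicated above.
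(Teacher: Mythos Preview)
Your proof is correct and follows exactly the route the paper itself indicates (the paper omits the argument, stating only that it is analogous to Theorem~\ref{twistnienie}, proved in Appendix~C): specialize the fixed--point equations (\ref{zera2z}) to the anti--coordination matrix, solve the resulting log--linear $2\times2$ system to get existence and uniqueness (with the degenerate line $s_1s_2=1$ handled by consistency), then linearize (\ref{dyn2z}) and read off $\tr J<0$ and $\det J=\alpha_1\alpha_2(1-s_1s_2)\cdot(\text{positive})$ for the stability dichotomy. One cosmetic remark: your expressions $\nu_{A2}=b_2/(1+z_1)$, $\nu_{B2}=c_2z_1/(1+z_1)$ have $b_2$ and $c_2$ interchanged relative to the paper's displayed mean--payoff formulas, but they are consistent with the paper's equation (\ref{zera2z}) (which carries the same interchange) and therefore reproduce the stated formula (\ref{antistackoord}) verbatim.
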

The proof is analogous as in Theorem \ref{twistnienie} and is omitted. 
Numerical simulations indicate that if the equilibrium (\ref{stackoord}) is locally stable then it attracts all investigated 
trajectories from the interior of the simplex of the game.

\subsection{Asymmetric non-coordination games}

We consider populations which play  asymmetric game with the payoff matrices 
\begin{equation}
	\left[\begin{array}{cc} (a,0) & (0,b) \\ (0,c) & (d,0)\end{array}\right]. \label{anticoord}
\end{equation}
This game can be considered as a two-population variant of the Matching Pennies game. The dynamics (\ref{dyn2z}) for the game (\ref{anticoord}), with positive payoffs $a, b, c, d$ reads
\begin{equation}\label{dynakoord}
	\begin{array}{rcl}
	\dot{z_1} &=& z_1^{1-\alpha_1}(1+z_1)^{\alpha_1}\left(\frac{az_2}{1+z_2}\right)^{1-\beta_1} - z_1(1+z_1)^{\alpha_1}\left(\frac{d}{1+z_2}\right)^{1-\beta_1}, \\[2mm]
	\dot{z_2} &=& z_2^{1-\alpha_2}(1+z_2)^{\alpha_2}\left(\frac{c}{1+z_1}\right)^{1-\beta_2} - z_2(1+z_2)^{\alpha_2}\left(\frac{bz_1}{1+z_1}\right)^{1-\beta_2}. 
	\end{array}
\end{equation}
The equations for equilibria have the form 
$\bar{z}_1 = \left(\frac{a\bar{z}_2}{d}\right)^{s_1}, \quad \bar{z}_2 = \left(\frac{b}{c\bar{z}_1}\right)^{s_2}, $
with the solution 
\begin{equation}\label{stacakoord}
	\bar{z}_1 = \left(\frac{ac^{s_2}}{db^{s_2}}\right)^{\frac{s_1}{1+s_1s_2}}, 
\quad \bar{z}_2 = \left(\frac{bd^{s_1}}{ca^{s_1}}\right)^{\frac{s_2}{1+s_1s_2}}.
\end{equation}
We prove that the equilibrium (\ref{stacakoord}) is locally asymptotically stable for all  $s_1,s_2 \in \mathbb{R}_+$.
Linearization of (\ref{dynakoord}) around the stationary point (\ref{stacakoord}) leads to the matrix 
$$
	\!\left[\!\begin{array}{l@{\!\!\!\!}r}
	-\alpha_1(1+\bar{z}_1)^{\alpha_1}\!\left(\frac{d}{1+\bar{z}_2}\right)^{1-\beta_1} &
	(1-\beta_1)\bar{z}_1^{1-\alpha_1}(1+\bar{z}_1)^{\alpha_1}\bar{z}_2^{-\beta_1}\!\left(\frac{a}{1+\bar{z}_2}\right)^{1-\beta_1}\\
	-(1-\beta_2)\bar{z}_2(1+\bar{z}_2)^{\alpha_2}\bar{z}_1^{-\beta_2}\!\left(\frac{b}{1+\bar{z}_1}\right)^{1-\beta_2} &
	-\alpha_2(1+\bar{z}_2)^{\alpha_2}\!\left(\frac{bz_1}{1+\bar{z}_1}\right)^{1-\beta_2}
	\end{array}\!\right]
$$
which has negative trace and positive determinant.

\end{document}